\newtheorem{theorem}{Theorem}[section]
\newtheorem{corollary}[theorem]{Corollary}
\newtheorem{lemma}[theorem]{Lemma}
\newtheorem{proposition}[theorem]{Proposition}
\newtheorem{defin}[theorem]{Definition}
\newenvironment{definition}{\begin{defin}\normalfont\quad}{\end{defin}}
\newtheorem{examp}[theorem]{Example}
\newtheorem{rema}[theorem]{Remark}
\newtheorem{prob}[theorem]{Problem}
\newenvironment{remark}{\begin{rema}\normalfont}{\end{rema}}
\numberwithin{equation}{section}
\newcommand{\bt}{\begin{thm}}
\newcommand{\et}{\end{thm}}
\newcommand{\bp}{\begin{proof}}
\newcommand{\ep}{\end{proof}}
\newcommand{\bprop}{\begin{prop}}
\newcommand{\eprop}{\end{prop}}
\newcommand{\bl}{\begin{lemma}}
\newcommand{\el}{\end{lemma}}
\newcommand{\bc}{\begin{corollary}}
\newcommand{\ec}{\end{corollary}}
\newcommand{\Z}{\mathbb{Z}}
\newcommand{\be}{\begin{enumerate}}
\newcommand{\ee}{\end{enumerate}}
\newcommand{\K}{\mathcal{K}}
\newcommand{\Enc}{\mathcal{E}}
\newcommand{\Dec}{\mathcal{D}}
\newcommand{\Tag}{\mathcal{T}}
\newcommand{\Mac}{\mathcal{M}}
\newcommand{\Auth}{\mathcal{V}}
\newcommand{\So}{\mathcal{S}}
\newcommand{\Me}{\mathcal{M}}
\newcommand{\OMIT}[1]{}
\title{On an almost-universal hash function family \\ with applications to authentication and secrecy codes\footnote{We have presented an extended abstract of this paper \cite{BKSTT21} in ISITA 2016.}}
\author{Khodakhast Bibak \thanks{Department of Computer Science, University of Victoria, Victoria, BC, Canada V8W 3P6. Email: {\tt \{kbibak,bmkapron,srinivas\}@uvic.ca}} \and Bruce M. Kapron \footnotemark[2] \and Venkatesh Srinivasan \footnotemark[2] \thanks{Centre for Quantum Technologies, National University of Singapore, Singapore 117543.} \and L\'aszl\'o T\'oth \thanks{Department of Mathematics, University of P\'ecs, 7624 P\'ecs, Hungary. Email: {\tt ltoth@gamma.ttk.pte.hu}}}
\begin{document}

\maketitle

\begin{abstract}
Universal hashing, discovered by Carter and Wegman in 1979, has many important applications in computer science. MMH$^*$, which was shown to be $\Delta$-universal by Halevi and Krawczyk in 1997, is a well-known universal hash function family. We introduce a variant of MMH$^*$, that we call GRDH, where we use an arbitrary integer $n>1$ instead of prime $p$ and let the keys $\mathbf{x}=\langle x_1, \ldots, x_k \rangle \in \mathbb{Z}_n^k$ satisfy the conditions $\gcd(x_i,n)=t_i$ ($1\leq i\leq k$), where $t_1,\ldots,t_k$ are given positive divisors of $n$. Then via connecting the universal hashing problem to the number of solutions of restricted linear congruences, we prove that the family GRDH is an $\varepsilon$-almost-$\Delta$-universal family of hash functions for some $\varepsilon<1$ if and only if $n$ is odd and $\gcd(x_i,n)=t_i=1$ $(1\leq i\leq k)$. Furthermore, if these conditions are satisfied then GRDH is $\frac{1}{p-1}$-almost-$\Delta$-universal, where $p$ is the smallest prime divisor of $n$. Finally, as an application of our results, we propose an authentication code with secrecy scheme which strongly generalizes the scheme studied by Alomair et al. [{\it J. Math. Cryptol.} {\bf 4} (2010), 121--148], and [{\it J.UCS} {\bf 15} (2009), 2937--2956].
\end{abstract}

{\bf Keywords:} Universal hashing; authentication code with secrecy; restricted linear congruence

\section{Introduction}\label{Sec 1}

Universal hash functions, discovered by Carter and Wegman \cite{CW}, have many applications in computer science, including cryptography and information security \cite{BHKKR, DORS, HK, HP, HAYA1, HAYA2, RW, TYVA, WC}, pseudorandomness \cite{HILL, NIS}, complexity theory \cite{RUWI, SIP}, randomized algorithms \cite{IMZU, MORA}, data structures \cite{PAPA, SIE}, and parallel computing \cite{KSV, LSS}. Since universality of hash functions and its variants are concepts central to this work, we begin by describing them in detail. Our description of these concepts closely follows the definitions given in \cite{HK}.

\subsection{Universal hashing and its variants}\label{Sec 2}

Let $D$ and $R$ be finite sets. Let $H$ be a family of functions from domain $D$ to range $R$. We say that $H$ is a {\it universal} family of hash functions (\cite{CW}) if the probability, over a random choice of a hash function from $H$, that two distinct keys in $D$ have the same hash value is at most $1/|R|$. That is, universal hashing captures the important property that distinct keys in $D$ do not {\it collide} too often. Furthermore, we say that $H$ is an {\it $\varepsilon$-almost-universal} ($\varepsilon$-AU) family of hash functions if the probability of collision is at most $\varepsilon$, for $\frac{1}{|R|} \leq \varepsilon < 1$. In other words, an $\varepsilon$-AU family, for sufficiently small $\varepsilon$, is {\it close} to being universal; see Definition~\ref{def:Uni hash} below. Universal and almost-universal hash functions have many applications in algorithm design. For example, they have been used to provide efficient solutions for the dictionary problem in which the goal is to maintain a dynamic set that is updated using insert and delete operations using small space so that membership queries that ask if a certain element is in $S$ can be answered quickly.

Motivated by applications to cryptography, a notion of $\Delta$-universality was introduced in \cite{K, R, S}. Suppose that $R$ is an Abelian group. We say that $H$ is a $\Delta$-{\it universal} family of hash functions if the probability, over a random $h \in H$, that two distinct keys in $D$ hash to values that are distance $b$ apart for any $b$ in $R$ is 1/$|R|$. Note that the case $b=0$ corresponds to universality. Furthermore, we say that $H$ is {\it $\varepsilon$-almost-$\Delta$-universal} ($\varepsilon$-A$\Delta$U) if this probability is at most $\varepsilon$, $\frac{1}{|R|} \leq \varepsilon < 1$. We remark that $\varepsilon$-A$\Delta$U families have applications to message authentication. Informally, it is possible to design a message authentication scheme using 
$\varepsilon$-A$\Delta$U families such that two parties can exchange signed messages over an unreliable channel and the probability that an adversary can forge a valid signed message to be sent across the channel is at most $\varepsilon$ (\cite{HK}). Also, the well-known leftover hash lemma states that (almost) universal hash functions are good randomness extractors.

Finally, in Section~\ref{Sec_4} on authentication codes with secrecy, we need the notion of strong universality which was introduced in \cite{WC}. We say that $H$ is a {\it strongly universal} family of hash functions if the probability, over a random choice of a hash function from $H$, that two distinct keys $x$ and $y$ in $D$ are mapped to $a$ and $b$ respectively is $1/|R|^2$. We say that $H$ is {\it $\varepsilon$-almost-strongly-universal} ($\varepsilon$-ASU) if this probability is at most $\varepsilon$, $\frac{1}{|R|^2} \leq \varepsilon < \frac{1}{|R|}$.

We now provide a formal definition of the concepts introduced above as in \cite{HK}. For a set 
$\mathcal{X}$, we write $x \leftarrow \mathcal{X}$ to denote that $x$ is chosen uniformly at random from 
$\mathcal{X}$.

\begin{definition}\label{def:Uni hash}
Let $H$ be a family of functions from a domain $D$ to a range $R$. Let $\varepsilon$ be a constant such that $\frac{1}{|R|} \leq \varepsilon < 1$. The probabilities below, are taken over the random choice of hash function $h$ from the set $H$.

\begin{itemize}
\item
The family $H$ is a {\it universal family of hash functions} if for any two distinct $x,y\in D$, we have $\text{Pr}_{h \leftarrow H}[h(x) = h(y)] \leq \frac{1}{|R|}$. Also, $H$ is an 
{\it $\varepsilon$-almost-universal} ($\varepsilon$-AU) {\it family of hash functions} if for any two distinct $x,y\in D$, we have $\text{Pr}_{h \leftarrow H}[h(x) = h(y)] \leq \varepsilon$.

\item
Suppose $R$ is an Abelian group. The family $H$ is a $\Delta$-{\it universal family of hash functions} if for any two distinct $x,y\in D$, and all $b\in R$, we have $\text{Pr}_{h \leftarrow H}[h(x) - h(y) = b] = \frac{1}{|R|}$, where ` $-$ ' denotes the group subtraction operation. Also, $H$ is an {\it $\varepsilon$-almost-$\Delta$-universal} ($\varepsilon$-A$\Delta$U) {\it family of hash functions} if for any two distinct $x,y\in D$, and all $b\in R$, we have $\text{Pr}_{h \leftarrow H}[h(x) - h(y) = b] \leq \varepsilon$.

\item
The family $H$ is a {\it strongly universal family of hash functions} if for any two distinct $x,y\in D$, and all $a,b\in R$, we have $\text{Pr}_{h \leftarrow H}[h(x) = a, \; h(y) = b] = \frac{1}{|R|^2}$. Also, $H$ is an {\it $\varepsilon$-almost-strongly universal} ($\varepsilon$-ASU) {\it family of hash functions} if for any two distinct $x,y\in D$, and all $a,b\in R$, we have $\text{Pr}_{h \leftarrow H}[h(x) = a, \; h(y) = b] \leq \frac{\varepsilon}{|R|}$.
\end{itemize}
\end{definition}

\subsection{MMH$^*$}

The hash function family we study, GRDH, is a variant of a well-known family which was named MMH$^*$ (Multilinear Modular Hashing) by Halevi and Krawczyk \cite{HK}. Let $p$ be a prime and $k$ be a positive integer. Each hash function in the family MMH$^*$ takes as input a $k$-tuple, $\mathbf{m}=\langle m_1, \ldots, m_k \rangle \in \mathbb{Z}_p^k$. It computes the dot product of $\mathbf{m}$ with a fixed $k$-tuple $\mathbf{x}=\langle x_1, \ldots, x_k \rangle \in \mathbb{Z}_p^k$ and outputs this value modulo $p$.

\begin{definition}\label{def:MMH$^*$}
Let $p$ be a prime and $k$ be a positive integer. The family MMH$^*$ is defined as follows:
\begin{align}\label{MMH*}
\text{MMH}^*:=\lbrace g_{\mathbf{x}} \; : \; \mathbb{Z}_p^k \rightarrow \mathbb{Z}_p \; | \; \mathbf{x}\in \mathbb{Z}_p^k \rbrace,
\end{align}
where
\begin{align}\label{MMH* for 2}
g_{\mathbf{x}}(\mathbf{m}) := \mathbf{m} \cdot \mathbf{x} \pmod{p} = \sum_{i=1}^k m_ix_i \pmod{p},
\end{align}
for any $\mathbf{x}=\langle x_1, \ldots, x_k \rangle \in \mathbb{Z}_p^k$, and any $\mathbf{m}=\langle m_1, \ldots, m_k \rangle \in \mathbb{Z}_p^k$.
\end{definition}

The family MMH$^*$ is widely attributed to Carter and Wegman \cite{CW}, while it seems that Gilbert, MacWilliams, and Sloane \cite{GMS} had already discovered it (but in the finite geometry setting). Halevi and Krawczyk \cite{HK}, using the multiplicative inverse method, proved that MMH$^*$ is a 
$\Delta$-universal family of hash functions. We also remark that, recently, Leiserson et al. \cite{LSS} rediscovered MMH$^*$ (called it ``DOTMIX compression function family") and using the same method as of Halevi and Krawczyk \cite{HK} proved that DOTMIX is $\Delta$-universal. Then they apply this result in studying the problem of deterministic parallel random-number generation for dynamic multithreading platforms in parallel computing.

\begin{theorem} {\rm (\cite{HK, LSS})} \label{thm:MMH* UNI}
The family \textnormal{MMH}$^*$ is a $\Delta$-universal family of hash functions.
\end{theorem}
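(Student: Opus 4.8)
The plan is to unwind the definition of $\Delta$-universality and reduce the claim to a counting problem about a single linear congruence modulo $p$. Fix two distinct tuples $\mathbf{m}=\langle m_1,\dots,m_k\rangle$ and $\mathbf{m}'=\langle m_1',\dots,m_k'\rangle$ in $\mathbb{Z}_p^k$ and an arbitrary $b\in\mathbb{Z}_p$. Since the hash value is linear in $\mathbf{x}$, one has $g_{\mathbf{x}}(\mathbf{m})-g_{\mathbf{x}}(\mathbf{m}')\equiv\sum_{i=1}^k (m_i-m_i')x_i \pmod p$. Writing $a_i:=m_i-m_i'\bmod p$, the event whose probability must be computed is exactly the set of $\mathbf{x}\in\mathbb{Z}_p^k$ satisfying $\sum_{i=1}^k a_i x_i\equiv b\pmod p$, and because $\mathbf{m}\neq\mathbf{m}'$ at least one coefficient $a_j$ is nonzero in $\mathbb{Z}_p$.

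The key step is then to count these $\mathbf{x}$, and this is where the primality of $p$ enters: $\mathbb{Z}_p$ is a field, so the nonzero coefficient $a_j$ is invertible. I would single out the $j$-th coordinate and observe that for each of the $p^{k-1}$ assignments of the remaining coordinates $(x_i)_{i\neq j}$, the congruence has a unique solution $x_j\equiv a_j^{-1}\bigl(b-\sum_{i\neq j}a_i x_i\bigr)\pmod p$. Hence the event has exactly $p^{k-1}$ elements among the $p^k$ equally likely choices of $\mathbf{x}$, so its probability is $p^{k-1}/p^k=1/p=1/|R|$, which is precisely the defining equality for $\Delta$-universality (and the case $b=0$ recovers ordinary universality).

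There is essentially no hard obstacle in this argument; the only point to watch is that the count must hold for \emph{every} target value $b$, not just $b=0$, which is automatic since the bijection between $(x_i)_{i\neq j}$ and solutions works verbatim for any $b$. It is worth flagging, however, that the step inverting $a_j$ is exactly what breaks when $p$ is replaced by a composite modulus $n$: a nonzero coefficient need not be a unit, the number of solutions of $\sum a_i x_i\equiv b\pmod n$ then depends delicately on $\gcd$-conditions, and this is precisely the phenomenon that the remainder of the paper analyzes for the family GRDH via counting restricted linear congruences.
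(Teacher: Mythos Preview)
Your proof is correct. The paper does not give its own proof of this theorem; it merely cites \cite{HK, LSS} and notes that Halevi and Krawczyk established the result ``using the multiplicative inverse method,'' which is exactly the argument you wrote: exploit the primality of $p$ to invert the nonzero coefficient $a_j$ and solve uniquely for $x_j$ given the remaining $p^{k-1}$ choices of coordinates.
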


Very recently, it was proved that MMH$^*$ with \textit{arbitrary} modulus is always almost-universal \cite{BKS4}.

\subsection{Our contributions}

Suppose that, instead of a prime $p$, one uses an arbitrary integer $n>1$ in the definition of MMH$^*$. Additionally, we ask that the keys $\mathbf{x}=\langle x_1, \ldots, x_k \rangle \in \mathbb{Z}_n^k$ satisfy the conditions $\gcd(x_i,n)=t_i$ ($1\leq i\leq k$), where $t_1,\ldots,t_k$ are given positive divisors of $n$. We call this new family GRDH and refer the reader to Section~\ref{Sec_3} for a formal definition.

Many natural questions arise: What can we say about universality (or $\varepsilon$-almost-universality) of GRDH? What can we say about $\Delta$-universality (or $\varepsilon$-almost-$\Delta$-universality) of GRDH? Recently, Alomair, Clark, and Poovendran \cite{ACP} presented a construction of codes with secrecy based on a universal hash function family that is a special case of GRDH. Is it possible to generalize their construction and analyse its security properties?

\begin{itemize}
\item
In Theorem~\ref{thm:GRDH e-au}, we prove that if $n,k>1$ then the family GRDH is an $\varepsilon$-AU family of hash functions for some $\varepsilon<1$ if and only if $n$ is odd and $\gcd(x_i,n)=t_i=1$ $(1\leq i\leq k)$. Furthermore, if these conditions are satisfied then GRDH is $\frac{1}{p-1}$-AU, where $p$ is the smallest prime divisor of $n$. This bound is tight.
\item
In Remark~\ref{rem:GRDH one var}, we conclude (from the idea of the proof of Theorem~\ref{thm:GRDH e-au}) that if $k=1$ then the family GRDH is an $\varepsilon$-AU family of hash functions for some 
$\varepsilon<1$ if and only if $\gcd(x_1,n)=t_1=1$. Furthermore, if $\gcd(x_1,n)=t_1=1$ (that is, if $x_1 \in \mathbb{Z}_n^{*}$) then the collision probability for any two distinct messages is `exactly zero'.
\item
In Theorem~\ref{thm:GRDH e-adeltau}, we show that if $n>1$ then the family GRDH is an $\varepsilon$-A$\Delta$U family of hash functions for some $\varepsilon<1$ if and only if $n$ is odd and 
$\gcd(x_i,n)=t_i=1$ $(1\leq i\leq k)$. Furthermore, if these conditions are satisfied then GRDH is 
$\frac{1}{p-1}$-A$\Delta$U, where $p$ is the smallest prime divisor of $n$. This bound is tight.
\item 
In Theorem~\ref{authenc-main}, we generalize the construction of authentication code with secrecy presented in \cite{ACP, AP}. Using Theorem~\ref{thm:GRDH e-adeltau}, we show that our construction is a 
$\frac{1}{(p-1)n^{k-1}},\frac{1}{p-1}$-authentication code with secrecy for equiprobable source states on $\Z_{n}^k \setminus \{\mathbf{0}\}$, where $n$ is odd, and $p$ is the smallest prime divisor of
$n$.
\end{itemize}

Our results show that if one uses a composite integer $n$ in the definition of \textnormal{MMH}$^*$ then even by choosing the keys $\mathbf{x}=\langle x_1, \ldots, x_k\rangle$ from ${\mathbb{Z}_n^{*}}^k$, or more generally, choosing the keys $\mathbf{x}=\langle x_1, \ldots, x_k\rangle$ from $\mathbb{Z}_n^k$ 
with the general conditions $\gcd(x_i,n)=t_i$ ($1\leq i\leq k$), where $t_1,\ldots,t_k$ are given positive divisors of $n$, we cannot get any strong collision bound (unless $k=1$ and $\gcd(x_1,n)=t_1=1$; in this case, as we mentioned above, the collision probability for any two distinct messages is `exactly zero'). Such impossibility results were not known before.

The main technique in proving the hashing results is connecting the universal hashing problem to the number of solutions of restricted linear congruences, which we believe is a novel idea and could be also of independent interest. We use an explicit formula for the number of solutions of restricted linear congruences, recently obtained by Bibak et al. \cite{BKSTT}, using properties of Ramanujan sums and of the finite Fourier transform of arithmetic functions, that we will review in Section~\ref{Sec_2}. We believe that this is the first paper that introduces applications of Ramanujan sums, finite Fourier transform, and restricted linear congruences in the study of universal hashing. We hope this approach will lead to further work.

\section{Restricted linear congruences}\label{Sec_2}

Throughout the paper, we use $(a_1,\ldots,a_k)$ to denote the greatest common divisor (gcd) of the integers $a_1,\ldots,a_k$, and write $\langle a_1,\ldots,a_k\rangle$ for an ordered $k$-tuple of integers. Also, for $a \in \Z \setminus \lbrace 0 \rbrace$, and a prime $p$, we use the notation $p^r\mid\mid a$ if $p^r\mid a$ and $p^{r+1}\nmid a$. We also use $\mathbf{0}$ to denote the vector of all zeroes. The multiplicative group of integers modulo $n$ is denoted by $\mathbb{Z}_n^{*}$.

Let $a_1,\ldots,a_k,b,n\in \Z$, $n\geq 1$. A linear congruence in $k$ unknowns $x_1,\ldots,x_k$ is of the form
\begin{align} \label{cong form}
a_1x_1+\cdots +a_kx_k\equiv b \pmod{n}.
\end{align}
By a solution of (\ref{cong form}), we mean an $\mathbf{x}=\langle x_1,\ldots,x_k \rangle \in \mathbb{Z}_n^k$ that satisfies (\ref{cong form}). The following result, proved by D. N. Lehmer \cite{LEH2}, gives the number of solutions of the above linear congruence:

\begin{proposition}\label{Prop: lin cong}
Let $a_1,\ldots,a_k,b,n\in \Z$, $n\geq 1$. The linear congruence $a_1x_1+\cdots +a_kx_k\equiv b \pmod{n}$ has a solution $\langle x_1,\ldots,x_k \rangle \in \Z_{n}^k$ if and only if $\ell \mid b$, where
$\ell=(a_1, \ldots, a_k, n)$. Furthermore, if this condition is satisfied, then there are $\ell n^{k-1}$ solutions.
\end{proposition}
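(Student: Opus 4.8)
The plan is to argue by induction on the number of variables $k$, reducing every step to the classical theory of a single linear congruence. The one fact we use repeatedly is this: for any modulus $m\ge 1$ and any integers $a,b$, the congruence $ax\equiv b\pmod m$ is solvable if and only if $(a,m)\mid b$, and when it is solvable its solution set in $\Z_m$ is a coset of $\{x\in\Z_m: ax\equiv 0\}$ and hence has exactly $(a,m)$ elements. This is precisely the assertion for $k=1$ (take $m=n$), since then $\ell\, n^{k-1}=(a_1,n)$.

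For the inductive step, assume the proposition for $k-1$ variables and consider $a_1x_1+\cdots+a_kx_k\equiv b\pmod n$. I would partition the solutions according to the value of the last coordinate $x_k=c$. For each fixed $c\in\Z_n$, the number of $\langle x_1,\ldots,x_{k-1}\rangle\in\Z_n^{k-1}$ with $a_1x_1+\cdots+a_{k-1}x_{k-1}\equiv b-a_kc\pmod n$ is, by the inductive hypothesis with $d:=(a_1,\ldots,a_{k-1},n)$, equal to $d\,n^{k-2}$ when $d\mid b-a_kc$ and $0$ otherwise. Hence the total number of solutions of the original congruence equals $d\,n^{k-2}$ times the number of $c\in\Z_n$ satisfying $a_kc\equiv b\pmod d$.

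It remains to count those $c$. Since the condition $a_kc\equiv b\pmod d$ depends only on $c\bmod d$ and $d\mid n$, each residue class mod $d$ contributes $n/d$ values of $c$ in $\Z_n$, so the count is $(n/d)$ times the number of solutions of $a_kc\equiv b$ in $\Z_d$. By the base fact applied with modulus $d$, that congruence is solvable if and only if $(a_k,d)\mid b$, and then has exactly $(a_k,d)$ solutions in $\Z_d$; and $(a_k,d)=(a_k,a_1,\ldots,a_{k-1},n)=(a_1,\ldots,a_k,n)=\ell$ by associativity of the gcd. Combining everything: when $\ell\mid b$ the original congruence has $\tfrac{n}{d}\cdot\ell\cdot d\,n^{k-2}=\ell\,n^{k-1}$ solutions, and when $\ell\nmid b$ it has none, which closes the induction.

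The only place any care is needed is the gcd bookkeeping between the two ``levels'' of the argument --- namely the identity $\bigl(a_k,(a_1,\ldots,a_{k-1},n)\bigr)=(a_1,\ldots,a_k,n)$ and the conversion of a count modulo $d$ into a count modulo $n$ via the factor $n/d$ --- and neither is hard; there are no analytic obstacles. As an alternative one could avoid induction by using the finite Fourier transform: writing $N$ for the number of solutions and $\zeta=e^{2\pi i/n}$, one detects the congruence by averaging the additive character to get $N=\tfrac{1}{n}\sum_{j=0}^{n-1}\zeta^{-jb}\prod_{i=1}^{k}\bigl(\sum_{x\in\Z_n}\zeta^{ja_ix}\bigr)$; each inner sum equals $n$ when $\tfrac{n}{(n,a_i)}\mid j$ and $0$ otherwise, so the surviving indices $j$ are exactly the $\ell$ multiples of $n/\ell$ in $\{0,\ldots,n-1\}$ (using $\operatorname{lcm}_i\tfrac{n}{(n,a_i)}=\tfrac{n}{\ell}$), and the resulting geometric sum over these $j$ is $\ell$ or $0$ according as $\ell\mid b$ or not, giving $N=\ell\,n^{k-1}$ or $0$. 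I would present the inductive proof as the main one, since it requires no auxiliary identities.
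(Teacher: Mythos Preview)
Your inductive proof is correct, and so is the Fourier-analytic alternative you sketch (including the identity $\operatorname{lcm}_i \tfrac{n}{(n,a_i)}=\tfrac{n}{\ell}$, which holds prime by prime). There is nothing to compare against: the paper states this proposition as a classical result due to D.~N.~Lehmer and does not supply a proof of its own. Your write-up would serve perfectly well as a self-contained justification; the inductive argument is the cleaner choice for the paper's purposes, while the character-sum variant foreshadows the Ramanujan-sum machinery used later for the \emph{restricted} congruences in Theorem~\ref{th_gen_expl}.
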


The solutions of the above congruence may be subject to certain conditions, such as $(x_i,n)=t_i$ ($1\leq i\leq k$), where $t_1,\ldots,t_k$ are given positive divisors of $n$. The number of solutions of this kind of congruence, which were called {\it restricted linear congruences} in \cite{BKSTT}, have been studied, in special cases, in many papers and have found very interesting applications in number theory, combinatorics, and cryptography, among other areas (see \cite{BKSTT3, COH0, DIX, JAWILL, LIS, MENE, NV, San2009, SanSan2013, SY2014, TOT}). Recently, Bibak et al. \cite{BKSTT} dealt with the problem in its `most general case' and using properties of Ramanujan sums and of the finite Fourier transform of arithmetic functions gave an explicit formula for the number of solutions of the restricted linear congruence
\begin{equation} \label{gen_rest_cong}
a_1x_1+\cdots +a_kx_k\equiv b \pmod{n}, \quad (x_i,n)=t_i \ (1\leq i\leq k),
\end{equation}
where $a_1,t_1,\ldots,a_k,t_k, b,n$ ($n\geq 1$) are arbitrary integers.

The special case of $k=2$, $a_i=1$, $t_i=1$ ($1\leq i\leq k$) of \eqref{gen_rest_cong} is related to a long-standing conjecture due to D. H. Lehmer from 1932. Also, the special case of $b=0$, $a_i=1$, $t_i=\frac{n}{m_i}$, $m_i\mid n$ ($1\leq i\leq k$) is related to the {\it orbicyclic} (multivariate arithmetic) function (\cite{LIS}), which has very interesting combinatorial and topological applications, in particular, in counting non-isomorphic maps on orientable surfaces. See \cite{BKSTT} for a detailed discussion about restricted linear congruences and their applications.

If in \eqref{gen_rest_cong} one has $a_i=0$ for every $1\leq i\leq k$, then clearly there are solutions 
$\langle x_1,\ldots,x_k\rangle$ if and only if $b\equiv 0\pmod{n}$ and $t_i \mid n$ ($1\leq i\leq k$), and in this case there are $\varphi(n/t_1)\cdots \varphi(n/t_k)$ solutions.

Consider the restricted linear congruence \eqref{gen_rest_cong} and assume that there is an $i_0$ such that $a_{i_0}\ne 0$. For every prime divisor $p$ of $n$ let $r_p$ be the exponent of $p$ in the prime factorization of $n$ and let $\mathfrak{m}_p=\mathfrak{m}_p(a_1,t_1,\ldots,a_k,t_k)$ denote the smallest $j\geq 1$ such that there is some $i$ with $p^j \nmid a_it_i$. There exists a finite $\mathfrak{m}_p$ for every $p$, since for a sufficiently large $j$ one has $p^j\nmid a_{i_0}t_{i_0}$. Furthermore, let
$$
e_p = e_p(a_1,t_1,\ldots,a_k,t_k) = \# \{i: 1\leq i\leq k, p^{\mathfrak{m}_p}\nmid a_it_i \}.
$$
By definition, $1 \leq e_p \leq$ the number of $i$ such that $a_i\ne 0$. Note that in many situations instead of $\mathfrak{m}_p(a_1,t_1,\ldots,a_k,t_k)$ we write $\mathfrak{m}_p$ and instead of $e_p(a_1,t_1,\ldots,a_k,t_k)$ we write $e_p$ for short. However, it is important to note that both $\mathfrak{m}_p$ 
and $e_p$ always depend on $a_1,t_1,\ldots,a_k,t_k,p$.

\begin{theorem} {\rm (\cite{BKSTT})} \label{th_gen_expl} Let $a_i,t_i, b,n\in \Z$, $n\geq 1$, $t_i\mid n$ {\rm ($1\leq i\leq k$)} and assume that $a_i\neq 0$ for at least one $i$. Consider the linear congruence $a_1x_1+\cdots +a_kx_k\equiv b \pmod{n}$, with $(x_i,n)=t_i$ {\rm ($1\leq i\leq k$)}. If there is a prime $p\mid n$ such that $\mathfrak{m}_p\leq r_p$ and $p^{\mathfrak{m}_p-1}\nmid b$ or $\mathfrak{m}_p\geq r_p+1$ and $p^{r_p}\nmid b$, then the linear congruence has no solution. Otherwise, the number of solutions is
\begin{equation} \label{main_prod_formula}
\mathlarger{\prod}_{i=1}^{k} \varphi\left(\frac{n}{t_i}\right)
\mathlarger{\prod}_{\substack{p\,\mid\, n \\ \mathfrak{m}_p \,\leq \, r_p \\ p^{\mathfrak{m}_p} \,\mid\, b}} p^{\mathfrak{m}_p-r_p-1} \left(1-\frac{(-1)^{e_p-1}}{(p-1)^{e_p-1}} \right)
\mathlarger{\prod}_{\substack{p\, \mid\, n \\ \mathfrak{m}_p \,\leq \, r_p \\ p^{\mathfrak{m}_p-1} \, \|\, b}} p^{\mathfrak{m}_p-r_p-1} \left(1-\frac{(-1)^{e_p}}{(p-1)^{e_p}}\right),
\end{equation}
where the last two products are over the prime factors $p$ of $n$ with the given additional properties. Note that the last product is empty and equal to $1$ if $b=0$.
\end{theorem}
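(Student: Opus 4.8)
The plan is to count the solutions by additive characters on $\Z_n$ and recognise the resulting inner sums as Ramanujan sums. Writing $e_n(u)=e^{2\pi i u/n}$ and using that $\frac1n\sum_{j=0}^{n-1}e_n(ju)$ equals $1$ when $n\mid u$ and $0$ otherwise, the number of solutions is
\[
N=\frac1n\sum_{j=0}^{n-1}e_n(-jb)\prod_{i=1}^{k}\ \sum_{\substack{x\bmod n\\ (x,n)=t_i}}e_n(ja_ix).
\]
In each inner sum I would substitute $x=t_iy$ with $y$ running over a reduced residue system modulo $n/t_i$; since $e_n(ja_it_iy)=e_{n/t_i}(ja_iy)$, this sum is exactly the Ramanujan sum $c_{n/t_i}(ja_i)$. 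Hence $N=\frac1n\sum_{j=0}^{n-1}e_n(-jb)\prod_{i=1}^{k}c_{n/t_i}(ja_i)$: a finite Fourier coefficient of an even function modulo $n$, and the whole problem is to evaluate this sum.

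Next I would reduce to prime powers. With $n=\prod_{p\mid n}p^{r_p}$, the Chinese Remainder Theorem factors the character sum, the conditions $(x_i,n)=t_i$ are equivalent to their $p$-local analogues, and $c_{n/t_i}(m)=\prod_{p\mid n}c_{p^{r_p-v_p(t_i)}}(m)$ by multiplicativity of Ramanujan sums in the modulus; so $N=\prod_{p\mid n}N_p$ with
\[
N_p=\frac1{p^{r_p}}\sum_{j\bmod p^{r_p}}e_{p^{r_p}}(-jb)\prod_{i=1}^{k}c_{p^{r_p-v_p(t_i)}}(ja_i),
\]
and it remains to evaluate each $N_p$. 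Inside $N_p$ I would group $j$ by $w:=v_p(j)\in\{0,\dots,r_p\}$ ($w=r_p$ meaning $j=0$): each factor $c_{p^{r_p-v_p(t_i)}}(ja_i)$ depends on $j$ only through $v_p(ja_i)=w+v_p(a_i)$, hence only through $w$, while $\sum_{v_p(j)=w}e_{p^{r_p}}(-jb)$ collapses to the Ramanujan sum $c_{p^{r_p-w}}(b)$. Writing $P_w$ for the common value of $\prod_{i=1}^{k}c_{p^{r_p-v_p(t_i)}}(ja_i)$ over all $j$ with $v_p(j)=w$, this gives
\[
N_p=\frac1{p^{r_p}}\sum_{w=0}^{r_p}P_w\,c_{p^{r_p-w}}(b),
\]
after which one inserts the explicit prime-power values $c_{p^{s}}(m)\in\{0,\ -p^{s-1},\ \varphi(p^{s})\}$ (according as $p^{s-1}\nmid m$, $p^{s-1}\,\|\,m$, or $p^{s}\mid m$).

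The core of the proof is the bookkeeping with these three values. One reads off: $P_w\neq0$ forces $w\ge r_p-\mathfrak{m}_p$; for $w\ge r_p-\mathfrak{m}_p+1$ every factor is in its $\varphi$-branch, so $P_w=\prod_{i=1}^{k}\varphi\!\left(p^{r_p-v_p(t_i)}\right)=:\Phi_p$ is constant (and $\prod_{p\mid n}\Phi_p=\prod_{i=1}^{k}\varphi(n/t_i)$); and at the boundary $w=r_p-\mathfrak{m}_p$ exactly the $e_p$ indices $i$ with $v_p(a_it_i)$ minimal fall into the $-p^{s-1}$ branch while the rest stay in the $\varphi$-branch, so that (since $r_p-v_p(t_i)\ge1$ for each of these $e_p$ indices when $\mathfrak{m}_p\le r_p$) one gets $P_{r_p-\mathfrak{m}_p}=\left(\frac{-1}{p-1}\right)^{e_p}\Phi_p$. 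On the other side, $c_{p^{r_p-w}}(b)\neq0$ forces $r_p-w\le v_p(b)+1$, and the relevant partial sums telescope: $\sum_{m=0}^{M}c_{p^m}(b)$ equals $p^{M}$ if $M\le v_p(b)$ and $0$ if $M\ge v_p(b)+1$. Combining the two sides: if $\mathfrak{m}_p\le r_p$ and $p^{\mathfrak{m}_p-1}\nmid b$, or $\mathfrak{m}_p\ge r_p+1$ and $p^{r_p}\nmid b$, the surviving sum vanishes and $N_p=0$, which is the no-solution clause; if $\mathfrak{m}_p\ge r_p+1$ and $p^{r_p}\mid b$ then every $P_w=\Phi_p$ and $N_p=\frac{\Phi_p}{p^{r_p}}\sum_{m=0}^{r_p}c_{p^m}(b)=\Phi_p$; and if $\mathfrak{m}_p\le r_p$ then only $w\in\{r_p-\mathfrak{m}_p,\dots,r_p\}$ contribute, leaving
\[
N_p=\frac1{p^{r_p}}\left(P_{r_p-\mathfrak{m}_p}\,c_{p^{\mathfrak{m}_p}}(b)+\Phi_p\sum_{m=0}^{\mathfrak{m}_p-1}c_{p^m}(b)\right).
\]
Here $\sum_{m=0}^{\mathfrak{m}_p-1}c_{p^m}(b)=p^{\mathfrak{m}_p-1}$, and substituting this, the value of $P_{r_p-\mathfrak{m}_p}$, and $c_{p^{\mathfrak{m}_p}}(b)=\varphi(p^{\mathfrak{m}_p})$ when $p^{\mathfrak{m}_p}\mid b$ (respectively $c_{p^{\mathfrak{m}_p}}(b)=-p^{\mathfrak{m}_p-1}$ when $p^{\mathfrak{m}_p-1}\,\|\,b$), a short simplification collapses $N_p$ to $\Phi_p\,p^{\mathfrak{m}_p-r_p-1}\bigl(1-(-1)^{e_p-1}/(p-1)^{e_p-1}\bigr)$, respectively $\Phi_p\,p^{\mathfrak{m}_p-r_p-1}\bigl(1-(-1)^{e_p}/(p-1)^{e_p}\bigr)$. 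Multiplying the $N_p$ over $p\mid n$ — primes with $\mathfrak{m}_p\ge r_p+1$ contributing only the factor $\Phi_p$, hence only to $\prod_{i}\varphi(n/t_i)$ — reproduces \eqref{main_prod_formula}; the degenerate situations (some $a_i=0$, so that factor stays in the $\varphi$-branch and $i$ never enters $e_p$; some $t_i=n$; or $b=0$, when the last product is vacuous) are absorbed in the same computation.

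The step I expect to be the main obstacle is this last, prime-power part: pinning down $P_{r_p-\mathfrak{m}_p}$ exactly — that precisely $e_p$ of the factors take the value $-p^{s-1}$ and that all of them carry a positive exponent, so that $P_{r_p-\mathfrak{m}_p}=\left(\frac{-1}{p-1}\right)^{e_p}\Phi_p$ — and then cleanly separating the three regimes for $b$ so that the telescoped partial sums combine with this boundary contribution into exactly the stated closed forms (tracking, in particular, where the signs $(-1)^{e_p}$ and $(-1)^{e_p-1}$ originate). The additive-character setup, the identification of the inner sums as Ramanujan sums, and the Chinese-Remainder reduction are comparatively routine.
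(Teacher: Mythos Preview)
The paper does not actually prove this theorem: it is quoted from \cite{BKSTT}, and the only information given here about the original argument is that it proceeds ``using properties of Ramanujan sums and of the finite Fourier transform of arithmetic functions.'' Your proposal follows exactly that route---orthogonality of additive characters, identification of the inner sums as Ramanujan sums $c_{n/t_i}(ja_i)$, CRT reduction to prime powers, and the three-branch evaluation of $c_{p^s}(m)$---and the bookkeeping you outline (in particular $\mathfrak{m}_p=1+\min_i v_p(a_it_i)$, the boundary value $P_{r_p-\mathfrak{m}_p}=\bigl(\tfrac{-1}{p-1}\bigr)^{e_p}\Phi_p$, and the telescoping $\sum_{m=0}^{M}c_{p^m}(b)\in\{p^M,0\}$) is correct and collapses to the stated formula. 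So your plan matches the method the paper attributes to \cite{BKSTT} and is sound.
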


Formula (\ref{main_prod_formula}) will be the core for the applications to universal hashing that we present in this paper.

\begin{corollary} {\rm (\cite{BKSTT})} \label{cor_zero} The restricted congruence given in Theorem~\ref{th_gen_expl} has no solutions if and only if one of the following cases holds:

(i) there is a prime $p\mid n$ with $\mathfrak{m}_p\leq r_p$ and $p^{\mathfrak{m}_p-1}\nmid b$;

(ii) there is a prime $p\mid n$ with $\mathfrak{m}_p\geq r_p+1$ and $p^{r_p}\nmid b$;

(iii) there is a prime $p\mid n$ with $\mathfrak{m}_p\leq r_p$, $e_p=1$ and $p^{\mathfrak{m}_p}\mid b$;

(iv) $n$ is even, $\mathfrak{m}_2 \leq r_2$, $e_2$ is odd and $2^{\mathfrak{m}_2}\mid b$;

(v) $n$ is even, $\mathfrak{m}_2 \leq r_2$, $e_2$ is even and $2^{\mathfrak{m}_2-1}\, \|\, b$.
\end{corollary}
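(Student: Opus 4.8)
The plan is to read the statement off directly from the explicit formula in Theorem~\ref{th_gen_expl}. First I would dispose of the two conditions under which that theorem already asserts unsolvability: if some prime $p\mid n$ satisfies $\mathfrak{m}_p\leq r_p$ and $p^{\mathfrak{m}_p-1}\nmid b$, or $\mathfrak{m}_p\geq r_p+1$ and $p^{r_p}\nmid b$, then the congruence has no solution, and these are exactly cases (i) and (ii). So for the remainder I may assume neither (i) nor (ii) holds, in which case the number of solutions equals the product \eqref{main_prod_formula}, and the claim reduces to: this product vanishes if and only if one of (iii), (iv), (v) holds.

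Next I would use that a product of nonnegative integers is zero iff one of its factors is zero, and that the first product $\prod_{i=1}^{k}\varphi(n/t_i)$ is a product of positive integers, hence nonzero. Therefore \eqref{main_prod_formula} vanishes iff some factor of the second or third product vanishes. Since each $p^{\mathfrak{m}_p-r_p-1}$ is positive, the factor indexed by $p$ in the second product is zero iff $(p-1)^{e_p-1}=(-1)^{e_p-1}$, and the factor indexed by $p$ in the third product is zero iff $(p-1)^{e_p}=(-1)^{e_p}$.

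I would then solve these two small equations by splitting on whether $p=2$ or $p$ is odd, using $e_p\geq 1$. For the second: if $p$ is odd then $(p-1)^{e_p-1}\geq 2^{e_p-1}$, which equals $\pm 1$ only when $e_p=1$; if $p=2$ then $(p-1)^{e_p-1}=1=(-1)^{e_p-1}$ iff $e_p$ is odd. Combined with the index conditions $\mathfrak{m}_p\leq r_p$ and $p^{\mathfrak{m}_p}\mid b$, this yields precisely (iii) and (iv). For the third: if $p$ is odd then $(p-1)^{e_p}\geq 2$, never $\pm 1$; if $p=2$ then $1=(-1)^{e_p}$ iff $e_p$ is even, so vanishing forces $p=2$ (hence $n$ even), together with $\mathfrak{m}_2\leq r_2$, $2^{\mathfrak{m}_2-1}\,\|\,b$ and $e_2$ even, which is exactly (v). Assembling these equivalences proves the corollary.

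The step requiring the most care is not deep but is the bookkeeping one: ensuring the case split on $p=2$ versus odd $p$ in the equations $(p-1)^{e_p-1}=(-1)^{e_p-1}$ and $(p-1)^{e_p}=(-1)^{e_p}$ is exhaustive, that the side conditions carried over from the index sets of the two products match verbatim the hypotheses in (iii)--(v), and that the $b=0$ convention is respected (the third product is then empty, no factor of it can vanish, and indeed (v) cannot occur since $2^{\mathfrak{m}_2-1}\,\|\,0$ fails, while $p^{\mathfrak{m}_p}\mid 0$ always holds so (iii)/(iv) remain available).
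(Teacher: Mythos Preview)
Your argument is correct and is exactly the natural way to derive the corollary from Theorem~\ref{th_gen_expl}: dispose of cases (i)--(ii) via the unsolvability clause, then analyze when a factor of \eqref{main_prod_formula} vanishes. One cosmetic point: you write ``a product of nonnegative integers,'' but the individual factors $p^{\mathfrak{m}_p-r_p-1}\bigl(1-\tfrac{(-1)^{e_p-1}}{(p-1)^{e_p-1}}\bigr)$ are in general nonnegative rationals (only the full product is an integer); the reasoning goes through unchanged for nonnegative reals, so this is harmless. Note also that cases (iii) and (iv) overlap at $p=2$, $e_2=1$; your disjunction ``($p$ odd and $e_p=1$) or ($p=2$ and $e_p$ odd)'' is logically equivalent to ``(iii) or (iv),'' as required.

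As for comparison with the paper: the present paper does not prove Corollary~\ref{cor_zero} at all---it is quoted from \cite{BKSTT}. Your derivation is the expected one and would presumably match what is done there.
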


Corollary~\ref{cor_zero} is the only result in the literature which gives \textit{necessary and sufficient conditions} for the (non-)existence of solutions of restricted linear congruences in their most general case and might lead to interesting applications/implications. For example, Corollary~\ref{cor_zero} can be considered as relevant to the generalized knapsack problem. The {\it knapsack problem} is of significant interest in cryptography, computational complexity, and several other areas. Micciancio \cite{MIC} proposed a generalization of this problem to arbitrary rings, and studied its average-case complexity. This {\it generalized knapsack problem}, proposed by Micciancio \cite{MIC}, is described as follows: for any ring $R$ and subset $S \subset R$, given elements $a_1, \ldots , a_k \in R$ and a target element $b \in R$, find $\langle x_1,\ldots,x_k\rangle \in S^k$ such that $\sum_{i=1}^k a_i \cdot x_i = b$, where all operations are performed in the ring. Interestingly, Corollary~\ref{cor_zero} helps us to deal with this problem in a quite natural case:

\begin{rema} \label{knapsack}
The generalized knapsack problem with $R=\mathbb{Z}_n$ and $S=\mathbb{Z}_n^{*}$ has no solutions if and only if one of the cases of Corollary~\ref{cor_zero} holds.
\end{rema}

Theorem~\ref{th_gen_expl} has also important applications in combinatorics, geometry, string theory, and quantum field theory (QFT) \cite{BKS2}, for example, it is related to the Harvey's famous theorem on the cyclic groups of automorphisms of compact Riemann surfaces \cite{BKS2, LIS}.

\section{GRDH}\label{Sec_3}

In this section, we introduce a variant of MMH$^*$ that we call GRDH (Generalized Restricted Dot Product Hashing). Then we investigate the $\varepsilon$-almost-universality and $\varepsilon$-almost-$\Delta$-universality of GRDH via connecting the problem to the number of solutions of restricted linear congruences.

\begin{definition}\label{def:GRDH}
Let $n$ and $k$ be positive integers ($n>1$). We define the family RDH as follows:
\begin{align}\label{RDH for}
\text{RDH}:=\lbrace \Upsilon_{\mathbf{x}} \; : \; \mathbb{Z}_n^k \rightarrow \mathbb{Z}_n \; : \; \mathbf{x}\in {\mathbb{Z}_n^{*}}^k \rbrace,
\end{align}
where
\begin{align}\label{RDH for 2}
\Upsilon_{\mathbf{x}}(\mathbf{m}) := \mathbf{m} \cdot \mathbf{x} \pmod{n} = \sum_{i=1}^k m_ix_i \pmod{n},
\end{align}
for any $\mathbf{x}=\langle x_1, \ldots, x_k\rangle\in {\mathbb{Z}_n^{*}}^k$, and any $\mathbf{m}=\langle m_1, \ldots, m_k\rangle \in \mathbb{Z}_n^k$. Suppose that $t_1,\ldots,t_k$ are given positive divisors of $n$. Now, if in the definition of RDH instead of having $\mathbf{x}=\langle x_1, \ldots, x_k\rangle\in {\mathbb{Z}_n^{*}}^k$, we have, more generally, $\mathbf{x}=\langle x_1,\ldots,x_k \rangle \in \mathbb{Z}_n^k$ with $(x_i,n)=t_i$ ($1\leq i\leq k$), then we get a generalization of RDH that we call GRDH.
\end{definition}

It would be an interesting question to investigate for which values of $n$, GRDH is $\varepsilon$-AU or $\varepsilon$-A$\Delta$U. We now deal with these problems. The explicit formula for the number of solutions of restricted linear congruences (Theorem~\ref{th_gen_expl}) plays a key role here.

First, we prove the following lemma which is needed in proving the hashing results.

\begin{lemma}\label{lem: pop}
Let $k$ and $n$ be positive integers ($n>1$). For every prime divisor $p$ of $n$ let $r_p$ be the exponent of $p$ in the prime factorization of $n$. Also, suppose that $t_1,\ldots,t_k$ are given positive divisors of $n$. There are the following two cases:

\noindent\textit{(i)} If there exists some $i_0$ such that $t_{i_0}\not=1$ then there exists $\mathbf{a}=\langle a_1,\dots,a_k \rangle \in \mathbb{Z}_n^k \setminus \lbrace \mathbf{0} \rbrace$ such that for every prime $p\mid n$ we have $\mathfrak{m}_p(a_1,t_1,\ldots,a_k,t_k)>r_p$.

\noindent\textit{(ii)} If $t_i=1$ ($1\leq i\leq k$) then for every $\mathbf{a}=\langle a_1,\dots,a_k \rangle \in \mathbb{Z}_n^k \setminus \lbrace \mathbf{0} \rbrace$ there exists at least one 
prime $p\mid n$ such that $\mathfrak{m}_p(a_1,\ldots,a_k)\leq r_p$.
\end{lemma}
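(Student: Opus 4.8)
The plan is to recall the definitions of $\mathfrak{m}_p$ and $e_p$ from Section~\ref{Sec_2} and work prime-by-prime. Recall that $\mathfrak{m}_p(a_1,t_1,\ldots,a_k,t_k)$ is the smallest $j\geq 1$ such that $p^j \nmid a_it_i$ for some $i$; so the inequality $\mathfrak{m}_p > r_p$ says precisely that $p^{r_p} \mid a_it_i$ for \emph{all} $i$, i.e.\ $p^{r_p}$ divides every product $a_it_i$, while $\mathfrak{m}_p \leq r_p$ says there is some $i$ with $p^{\mathfrak{m}_p} \nmid a_it_i$ and $\mathfrak{m}_p \leq r_p$.

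For part (i), suppose $t_{i_0}\neq 1$, so some prime $q\mid n$ divides $t_{i_0}$. First I would construct $\mathbf{a}$ explicitly: for each $i$, set $a_i := n/t_i$ if $t_i \neq 1$, and $a_i := 0$ if $t_i = 1$. Since $t_{i_0}\neq 1$ we have $a_{i_0} = n/t_{i_0}$, and this is a nonzero element of $\mathbb{Z}_n$ because $t_{i_0}\mid n$ and $t_{i_0} > 1$; hence $\mathbf{a}\in\mathbb{Z}_n^k\setminus\{\mathbf{0}\}$. Now fix any prime $p\mid n$. For each $i$: if $t_i\neq 1$ then $a_it_i = (n/t_i)t_i = n$, so $p^{r_p}\mid n = a_it_i$; if $t_i = 1$ then $a_i = 0$ so $a_it_i = 0$, which is divisible by every power of $p$. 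Thus $p^{r_p}\mid a_it_i$ for all $i$ and all primes $p\mid n$, which by the definition of $\mathfrak{m}_p$ forces $\mathfrak{m}_p > r_p$ for every $p\mid n$, as required.

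For part (ii), suppose $t_i = 1$ for all $i$, and let $\mathbf{a}=\langle a_1,\ldots,a_k\rangle\in\mathbb{Z}_n^k\setminus\{\mathbf{0}\}$. Since $\mathbf{a}\neq\mathbf{0}$, some $a_{i_1}$ is not $\equiv 0\pmod n$, meaning $n\nmid a_{i_1}$; hence there is at least one prime $p\mid n$ for which $p^{r_p}\nmid a_{i_1}$. For that prime, since $t_{i_1}=1$, we have $a_{i_1}t_{i_1} = a_{i_1}$, and $p^{r_p}\nmid a_{i_1}t_{i_1}$. By definition $\mathfrak{m}_p$ is the least $j\geq 1$ with $p^j \nmid a_it_i$ for some $i$; since $j = r_p$ already works (witnessed by $i = i_1$), we get $\mathfrak{m}_p \leq r_p$, which is the claim.

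I do not anticipate a serious obstacle here; the statement is essentially an unwinding of the definition of $\mathfrak{m}_p$. The one point that needs a little care is checking that the constructed $\mathbf{a}$ in part (i) is genuinely nonzero in $\mathbb{Z}_n^k$ — this is where the hypothesis $t_{i_0}\neq 1$ is used, via $n/t_{i_0}\not\equiv 0\pmod n$ — and, symmetrically, in part (ii) translating ``$\mathbf{a}\neq\mathbf{0}$ in $\mathbb{Z}_n^k$'' into the existence of a coordinate $a_{i_1}$ with $n\nmid a_{i_1}$ and then into a prime $p$ with $p^{r_p}\nmid a_{i_1}$. Everything else is immediate from divisibility.
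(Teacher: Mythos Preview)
Your proposal is correct and follows essentially the same approach as the paper. The only cosmetic differences are that in part (i) the paper sets just one coordinate $a_{i_0}=n/t_{i_0}$ and zeros out the rest (whereas you set $a_i=n/t_i$ for every $i$ with $t_i\neq 1$), and in part (ii) the paper phrases the argument as a contrapositive (assume $\mathfrak{m}_p>r_p$ for all $p$, deduce $n\mid a_i$ for all $i$) while you argue directly; neither difference is substantive.
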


\begin{proof}
\noindent\textit{(i)} WLOG, let $t_1\not=1$, say, $t_1=t$ with $t\mid n$ and $t>1$. Take $a_1=\frac{n}{t}$ and $a_2=\cdots=a_k=0$. Now, for every prime $p\mid n$ we have $p^{r_p} \mid a_it_i$ ($1\leq i\leq k$). Therefore, for every prime $p\mid n$ we have $\mathfrak{m}_p(\frac{n}{t},t,0,t_2,\ldots,0,t_k)>r_p$.

\noindent\textit{(ii)} Let $t_i=1$ ($1\leq i\leq k$) and $\mathbf{a}=\langle a_1,\dots,a_k \rangle \in \mathbb{Z}_n^k \setminus \lbrace \mathbf{0} \rbrace$ be given. Suppose that for every prime $p\mid n$ we have $\mathfrak{m}_p(a_1,\ldots,a_k)> r_p$. This implies that for every prime $p\mid n$ we have $p^{r_p} \mid a_i$ ($1\leq i\leq k$). Therefore, we get $n \mid a_i$ ($1\leq i\leq k$) which is not possible because there exists some $i$ such that $a_i \in \mathbb{Z}_n \setminus \lbrace 0 \rbrace$.
\end{proof}

Now, we are ready to investigate the $\varepsilon$-almost-universality of GRDH.

\begin{theorem}\label{thm:GRDH e-au}
Let $n$ and $k$ be positive integers $(n,k>1)$. The family \textnormal{GRDH} is an $\varepsilon$-\textnormal{AU} family of hash functions for some $\varepsilon<1$ if and only if $n$ is odd and 
$(x_i,n)=t_i=1$ $(1\leq i\leq k)$. Furthermore, if these conditions are satisfied then \textnormal{GRDH} \textnormal{(}which is then reduced to \textnormal{RDH}\textnormal{)} is $\frac{1}{p-1}$-\textnormal{AU}, where $p$ is the smallest prime divisor of $n$. This bound is tight.
\end{theorem}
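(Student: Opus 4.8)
The plan is to express the collision probability of \textnormal{GRDH} exactly via Theorem~\ref{th_gen_expl} and then read off both the characterization and the constant $\tfrac{1}{p-1}$. Fix distinct $\mathbf{m},\mathbf{m}'\in\mathbb{Z}_n^k$ and set $\mathbf{a}=\mathbf{m}-\mathbf{m}'\in\mathbb{Z}_n^k\setminus\{\mathbf{0}\}$; conversely every $\mathbf{a}\in\mathbb{Z}_n^k\setminus\{\mathbf{0}\}$ occurs in this way (take $\mathbf{m}=\mathbf{a}$, $\mathbf{m}'=\mathbf{0}$). Since a key $\mathbf{x}$ is drawn uniformly from the $\prod_{i=1}^{k}\varphi(n/t_i)$ tuples with $(x_i,n)=t_i$, the collision probability $\mathrm{Pr}_{\mathbf{x}}[\Upsilon_{\mathbf{x}}(\mathbf{m})=\Upsilon_{\mathbf{x}}(\mathbf{m}')]$ equals $N_{\mathbf{a}}\big/\prod_{i=1}^{k}\varphi(n/t_i)$, where $N_{\mathbf{a}}$ is the number of solutions of the restricted linear congruence $a_1x_1+\cdots+a_kx_k\equiv 0\pmod n$ with $(x_i,n)=t_i$. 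As $\mathbf{a}\neq\mathbf{0}$, Theorem~\ref{th_gen_expl} applies with $b=0$; since $p^{j}\mid 0$ for all $j$, its exceptional (no-solution) cases never occur, the last product in \eqref{main_prod_formula} is empty, and the second product there runs over all $p\mid n$ with $\mathfrak{m}_p\leq r_p$, so
\[
\mathrm{Pr}_{\mathbf{x}}[\Upsilon_{\mathbf{x}}(\mathbf{m})=\Upsilon_{\mathbf{x}}(\mathbf{m}')] \;=\; P(\mathbf{a}) \;:=\; \prod_{\substack{p\,\mid\,n \\ \mathfrak{m}_p\,\leq\,r_p}} p^{\mathfrak{m}_p-r_p-1}\left(1-\frac{(-1)^{e_p-1}}{(p-1)^{e_p-1}}\right),
\]
where $\mathfrak{m}_p=\mathfrak{m}_p(a_1,t_1,\ldots,a_k,t_k)$ and $e_p=e_p(a_1,t_1,\ldots,a_k,t_k)$.

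The next step is to bound a single factor of $P(\mathbf{a})$. If $p\mid n$ has $\mathfrak{m}_p\leq r_p$, then $p^{\mathfrak{m}_p-r_p-1}\leq 1/p$, while $1-(-1)^{e_p-1}/(p-1)^{e_p-1}$ is $0$ when $e_p=1$ and, when $e_p\geq 2$, is nonnegative and at most $p/(p-1)$. Hence the $p$-th factor is $0$ if $e_p=1$ and lies in $[0,1/(p-1)]$ if $e_p\geq 2$; in particular it is $<1$ whenever $p$ is odd, whereas the factor for $p=2$ equals $1$ precisely when $\mathfrak{m}_2=r_2$ and $e_2$ is even. This split between $p=2$ and odd $p$ is the crux of the theorem, and combining it with Lemma~\ref{lem: pop} gives both directions.

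For the ``if'' direction, assume $n$ is odd and $t_1=\cdots=t_k=1$, and let $p$ be the smallest prime divisor of $n$, so $p\geq 3$. For any $\mathbf{a}\neq\mathbf{0}$, Lemma~\ref{lem: pop}(ii) supplies a prime $q\mid n$ with $\mathfrak{m}_q\leq r_q$, so $P(\mathbf{a})$ is a nonempty product of the factors above. If one of them is $0$ then $P(\mathbf{a})=0\leq\frac{1}{p-1}$; otherwise every prime $q$ appearing has $e_q\geq 2$, so each factor is at most $\frac{1}{q-1}\leq\frac{1}{p-1}$ and strictly less than $1$, whence $P(\mathbf{a})\leq\frac{1}{p-1}<1$. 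Since $\frac{1}{n}\leq\frac{1}{p-1}<1$, it follows that \textnormal{GRDH} is $\frac{1}{p-1}$-\textnormal{AU}.

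For the ``only if'' direction and tightness I would exhibit explicit $\mathbf{a}$. If some $t_{i_0}\neq 1$, Lemma~\ref{lem: pop}(i) yields $\mathbf{a}\neq\mathbf{0}$ with $\mathfrak{m}_p>r_p$ for all $p\mid n$, so $P(\mathbf{a})$ is an empty product, i.e.\ equal to $1$; concretely one may take $a_{i_0}=n/t_{i_0}$ and the remaining $a_i=0$, for which the congruence holds for every key. If instead all $t_i=1$ but $n$ is even, take $\mathbf{a}=\langle n/2,\,n/2,\,0,\ldots,0\rangle$ (this uses $k>1$): one checks $\mathfrak{m}_q=r_q+1$ for each odd prime $q\mid n$, $\mathfrak{m}_2=r_2$, and $e_2=2$, so $P(\mathbf{a})=2^{-1}(1+1)=1$ (equivalently, $\tfrac{n}{2}x_1+\tfrac{n}{2}x_2\equiv 0\pmod n$ holds whenever $x_1,x_2$ share the same parity, which is automatic since units modulo an even $n$ are odd). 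In either case the collision probability attains $1$, so \textnormal{GRDH} is not $\varepsilon$-\textnormal{AU} for any $\varepsilon<1$, proving the ``only if'' part. Finally, for tightness when $n$ is odd and all $t_i=1$, take $\mathbf{a}=\langle n/p,\,n/p,\,0,\ldots,0\rangle$: a routine computation gives $\mathfrak{m}_p=r_p$, $e_p=2$, and $\mathfrak{m}_q=r_q+1$ for every other prime $q\mid n$, so $P(\mathbf{a})=p^{-1}\cdot\frac{p}{p-1}=\frac{1}{p-1}$. The main obstacle is the single-factor estimate together with the case analysis on $e_p$ and on $p=2$ versus odd $p$; the rest is a direct application of Theorem~\ref{th_gen_expl} and Lemma~\ref{lem: pop}, plus verifying that the advertised tuples $\mathbf{a}$ have the stated values of $\mathfrak{m}_p$ and $e_p$.
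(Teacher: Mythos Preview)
Your proof is correct and follows essentially the same approach as the paper: reduce the collision probability to the formula of Theorem~\ref{th_gen_expl} with $b=0$, bound each factor by $\tfrac{1}{p-1}$, invoke Lemma~\ref{lem: pop} for both directions, and exhibit the explicit difference vectors $\langle n/2,n/2,0,\ldots,0\rangle$ and $\langle n/p,n/p,0,\ldots,0\rangle$ for the failure case and tightness respectively. Your treatment is slightly more explicit than the paper's in isolating the case $e_p=1$ (factor $=0$) and in giving the direct parity argument for the even-$n$ case, but the structure and all key steps are identical.
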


\begin{proof}
Assume the setting of the family GRDH, and that $\mathbf{t}=\langle t_1,\dots,t_k \rangle$ is given. Let $n>1$ and for every prime divisor $p$ of $n$ let $r_p$ be the exponent of $p$ in the prime factorization of $n$. Suppose that $\mathbf{m}=\langle m_1, \ldots, m_k \rangle \in \mathbb{Z}_n^k$ and $\mathbf{m}'= \langle m'_1, \ldots, m'_k \rangle \in \mathbb{Z}_n^k$ are any two distinct messages. Put $\mathbf{a}=\langle a_1,\dots,a_k \rangle = \mathbf{m}-\mathbf{m}'$. Since $\mathbf{m}\not=\mathbf{m}'$, there exists some $i$ such that $a_i \not= 0$. If in the family GRDH there is a collision between $\mathbf{m}$ and $\mathbf{m}'$, this means that there exists an $\mathbf{x}=\langle x_1,\ldots,x_k \rangle \in \mathbb{Z}_n^k$ with $(x_i,n)=t_i$, $t_i\mid n$ ($1\leq i\leq k$) such that $\Upsilon_{\mathbf{x}}(\mathbf{m})=\Upsilon_{\mathbf{x}}(\mathbf{m'})$. Clearly,
\begin{align*}
\Upsilon_{\mathbf{x}}(\mathbf{m})=\Upsilon_{\mathbf{x}}(\mathbf{m'}) \Longleftrightarrow \sum_{i=1}^k a_ix_i \equiv 0 \pmod{n}.
\end{align*}
So, we need to find the number of solutions $\mathbf{x}=\langle x_1,\ldots,x_k \rangle \in \mathbb{Z}_n^k$ of the restricted linear congruence $a_1x_1+\cdots +a_kx_k\equiv 0 \pmod{n}$, with $(x_i,n)=t_i$, $t_i\mid n$ ($1\leq i\leq k$). Here, since $b=0$, none of the two cases stated in the first part of Theorem~\ref{th_gen_expl} holds. Thus, by formula (\ref{main_prod_formula}), there are exactly
\begin{equation} \label{main_prod_formula2}
\mathlarger{\prod}_{i=1}^{k} \varphi\left(\frac{n}{t_i}\right)
\mathlarger{\prod}_{\substack{p\,\mid\, n \\ \mathfrak{m}_p \,\leq \, r_p}} p^{\mathfrak{m}_p-r_p-1} \left(1-\frac{(-1)^{e_p-1}}{(p-1)^{e_p-1}} \right)
\end{equation}
choices for such $\mathbf{x}=\langle x_1,\ldots,x_k \rangle \in \mathbb{Z}_n^k$ that satisfy the aforementioned restricted linear congruence, where the last product is over the prime factors $p$ of $n$ with $\mathfrak{m}_p \leq r_p$, $r_p$ is the exponent of $p$ in the prime factorization of $n$, 
$\mathfrak{m}_p$ is the smallest $j\geq 1$ such that there is some $i$ with $p^j \nmid a_it_i$, and 
$$
e_p = \# \{i: 1\leq i\leq k, p^{\mathfrak{m}_p}\nmid a_it_i \}.
$$
Also, since $(x_i,n)=t_i$ ($1\leq i\leq k$), the \textit{total} number of choices for $\langle x_1, \ldots, x_k \rangle$ is $\prod_{i=1}^{k} \varphi(\frac{n}{t_i})$. Therefore, given any $\mathbf{a}=\langle a_1,\dots,a_k \rangle \in \mathbb{Z}_n^k \setminus \lbrace \mathbf{0} \rbrace$, the collision probability is exactly 
\begin{align} \label{col prob given}
P_{\mathbf{a}}(n, \mathbf{t})=\mathlarger{\prod}_{\substack{p\,\mid\, n \\ \mathfrak{m}_p \,\leq \, r_p}} p^{\mathfrak{m}_p-r_p-1} \left(1-\frac{(-1)^{e_p-1}}{(p-1)^{e_p-1}} \right).
\end{align}

Now, there are two cases:

\noindent\textit{(i)} If for a prime $p \mid n$ we have $\mathfrak{m}_p \leq r_p$ then, by (\ref{col prob given}), the term corresponding to this $p$ in $P_{\mathbf{a}}(n, \mathbf{t})$ equals
\begin{align*}
p^{\mathfrak{m}_p-r_p-1} \left(1-\frac{(-1)^{e_p-1}}{(p-1)^{e_p-1}} \right)\leq p^{r_p-r_p-1}\left(1-\frac{(-1)^{2-1}}{(p-1)^{2-1}}\right)=\frac{1}{p-1}.
\end{align*}

\noindent\textit{(ii)} If for a prime $p \mid n$ we have $\mathfrak{m}_p > r_p$ then, by (\ref{col prob given}), the term corresponding to this $p$ in $P_{\mathbf{a}}(n, \mathbf{t})$ equals 1.

Let there exists some $i_0$ such that $t_{i_0}\not=1$. Then, by Lemma~\ref{lem: pop}(i), there exists $\mathbf{a}=\langle a_1,\dots,a_k \rangle \in \mathbb{Z}_n^k \setminus \lbrace \mathbf{0} \rbrace$ such that for every prime $p\mid n$ we have $\mathfrak{m}_p(a_1,t_1,\ldots,a_k,t_k)>r_p$. Now, by (\ref{col prob given}) and case (ii) above, the collision probability for this specific $\mathbf{a}$ is \textit{exactly one}. Now, assume that $t_i=1$ ($1\leq i\leq k$). Then, if $n$ is even, by taking $a_1=a_2=\frac{n}{2}$ and $a_3=\cdots=a_k=0$, one can see that 
$\mathfrak{m}_2(\frac{n}{2},\frac{n}{2},0,\ldots,0)=r_2$ and $e_2=2$, and for every other prime $p\mid n$ we have $\mathfrak{m}_p(\frac{n}{2},\frac{n}{2},0,\ldots,0)>r_p$. Now, by (\ref{col prob given}) and case (ii) above, the collision probability for this specific $\mathbf{a}$ is \textit{exactly one}.

Now, suppose that $n$ is odd and $t_i=1$ ($1\leq i\leq k$). Then, by Lemma~\ref{lem: pop}(ii), for every 
$\mathbf{a}=\langle a_1,\dots,a_k \rangle \in \mathbb{Z}_n^k \setminus \lbrace \mathbf{0} \rbrace$ there exists at least one prime $p\mid n$ such that $\mathfrak{m}_p(a_1,\ldots,a_k)\leq r_p$. Now, by (\ref{col prob given}) and cases (i), (ii) above, one can see that 
$$
\max_{\mathbf{a}=\mathbf{m}-\mathbf{m}' \in \mathbb{Z}_n^k \setminus \lbrace \mathbf{0} \rbrace} P_{\mathbf{a}}(n, \mathbf{t})
$$ 
is achieved in a specific $\mathbf{a}=\langle a_1,\dots,a_k \rangle \in \mathbb{Z}_n^k \setminus \lbrace \mathbf{0} \rbrace$ for which there exists \textit{exactly one} prime $p\mid n$ such that 
$\mathfrak{m}_p(a_1,\ldots,a_k)\leq r_p$, and furthermore, $p$ has to be the smallest prime divisor of 
$n$ that we denote by $p_{\min}$.

Consequently, if $n$ is odd and $(x_i,n)=t_i=1$ ($1\leq i\leq k$) then for any two distinct messages 
$\mathbf{m}, \mathbf{m}' \in \mathbb{Z}_n^k$, we have
\begin{align*}
\text{Pr}_{\Upsilon_{\mathbf{x}} \leftarrow \text{GRDH}}[\Upsilon_{\mathbf{x}}(\mathbf{m})=\Upsilon_{\mathbf{x}}(\mathbf{m'})] \leq \max_{\mathbf{a}=\mathbf{m}-\mathbf{m}' \in \mathbb{Z}_n^k \setminus \lbrace \mathbf{0} \rbrace} P_{\mathbf{a}}(n, \mathbf{t})\leq \frac{1}{p_{\min}-1}\leq \frac{1}{2}.
\end{align*}
Therefore, if $n$ is odd and $(x_i,n)=t_i=1$ ($1\leq i\leq k$) then GRDH (which is then reduced to RDH) is $\frac{1}{p_{\min}-1}$-\textnormal{AU}. We also note that this bound is tight: take $a_1=a_2=\frac{n}{p_{\min}}$ and $a_3=\cdots=a_k=0$. So, we get that $\mathfrak{m}_{p_{\min}}(\frac{n}{p_{\min}},\frac{n}{p_{\min}},0,\ldots,0)=r_{p_{\min}}$ and $e_{p_{\min}}=2$, and for every other prime $p\mid n$ we get that 
$\mathfrak{m}_p(\frac{n}{p_{\min}},\frac{n}{p_{\min}},0,\ldots,0)>r_p$. Now, by (\ref{col prob given}) and case (ii) above, the collision probability for this specific $\mathbf{a}$ is \textit{exactly} $\frac{1}{p_{\min}-1} \leq \frac{1}{2}$.
\end{proof}

The following remark gives a necessary and sufficient condition for the $\varepsilon$-almost-universality of the family GRDH in the case of $k=1$. We omit the proof as it is simply obtained from the above argument (this special case can be also proved directly, or, from \cite[Th. 3.1]{BKSTT}).

\begin{rema}\label{rem:GRDH one var}
If $k=1$ then the family \textnormal{GRDH} is an $\varepsilon$-\textnormal{AU} family of hash functions for some $\varepsilon<1$ if and only if $(x_1,n)=t_1=1$. Furthermore, if $(x_1,n)=t_1=1$ then the collision probability for any two distinct messages is `exactly zero'.
\end{rema}

Now, we investigate the $\varepsilon$-almost-$\Delta$-universality of GRDH. Note the change from $k>1$ in Theorem~\ref{thm:GRDH e-au} to $k\geq 1$ in Theorem~\ref{thm:GRDH e-adeltau}. The proof idea is similar to that of Theorem~\ref{thm:GRDH e-au}; so, in the proof we only write the parts which need more arguments.

\begin{theorem}\label{thm:GRDH e-adeltau}
Let $n$ and $k$ be positive integers $(n>1)$. The family \textnormal{GRDH} is an $\varepsilon$-\textnormal{A}$\Delta$\textnormal{U} family of hash functions for some $\varepsilon<1$ if and only if $n$ is odd and $(x_i,n)=t_i=1$ $(1\leq i\leq k)$. Furthermore, if these conditions are satisfied then \textnormal{GRDH} \textnormal{(}which is then reduced to \textnormal{RDH}\textnormal{)} is $\frac{1}{p-1}$-\textnormal{A}$\Delta$\textnormal{U}, where $p$ is the smallest prime divisor of $n$. This bound is tight.
\end{theorem}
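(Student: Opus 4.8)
The plan is to mirror the structure of the proof of Theorem~\ref{thm:GRDH e-au}, replacing the homogeneous congruence $\sum a_i x_i \equiv 0$ by the inhomogeneous one $\sum a_i x_i \equiv b \pmod n$. First I would fix two distinct messages $\mathbf{m},\mathbf{m}'$ and any $b\in\Z_n$, set $\mathbf{a}=\mathbf{m}-\mathbf{m}'$ (so some $a_i\neq 0$), and observe that $\Upsilon_{\mathbf{x}}(\mathbf{m})-\Upsilon_{\mathbf{x}}(\mathbf{m}')=b$ is exactly the restricted linear congruence $a_1x_1+\cdots+a_kx_k\equiv b\pmod n$ with $(x_i,n)=t_i$. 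Then I would invoke Theorem~\ref{th_gen_expl}: dividing the solution count \eqref{main_prod_formula} by the total number of keys $\prod_i\varphi(n/t_i)$, the collision-at-distance-$b$ probability is either $0$ (when one of the obstruction conditions of Corollary~\ref{cor_zero} holds) or equals
\[
P_{\mathbf{a},b}(n,\mathbf{t})=\mathlarger{\prod}_{\substack{p\mid n\\ \mathfrak{m}_p\leq r_p\\ p^{\mathfrak{m}_p}\mid b}} p^{\mathfrak{m}_p-r_p-1}\left(1-\frac{(-1)^{e_p-1}}{(p-1)^{e_p-1}}\right)\mathlarger{\prod}_{\substack{p\mid n\\ \mathfrak{m}_p\leq r_p\\ p^{\mathfrak{m}_p-1}\|\,b}} p^{\mathfrak{m}_p-r_p-1}\left(1-\frac{(-1)^{e_p}}{(p-1)^{e_p}}\right),
\]
which reduces to $P_{\mathbf{a}}(n,\mathbf{t})$ when $b=0$. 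Since we need a uniform bound over all $b$, it suffices to bound each factor and note that taking $b=0$ makes the second (signed-plus) product empty, so it cannot help lower the maximum.

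Next I would bound the individual factors exactly as in the previous proof. For a prime $p$ with $\mathfrak{m}_p\leq r_p$: in the first product the factor is $p^{\mathfrak{m}_p-r_p-1}\bigl(1-(-1)^{e_p-1}/(p-1)^{e_p-1}\bigr)\leq p^{-1}\cdot 2/(p-1)\cdot\text{(worst }e_p)$ — more carefully, the factor is maximized at $\mathfrak{m}_p=r_p$ and $e_p=2$, giving $\tfrac{1}{p-1}$; in the second product the factor $p^{\mathfrak{m}_p-r_p-1}\bigl(1-(-1)^{e_p}/(p-1)^{e_p}\bigr)$ is likewise at most $p^{-1}\bigl(1+1/(p-1)\bigr)=\tfrac{1}{p-1}$ at $\mathfrak{m}_p=r_p,e_p=1$. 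For $p$ with $\mathfrak{m}_p>r_p$ the factor is $1$. So every factor is $\leq\tfrac{1}{p-1}\leq\tfrac12$ for odd $n$. For the "only if" direction I would reuse Lemma~\ref{lem: pop}(i) (if some $t_{i_0}\neq 1$, pick $\mathbf{a}$ with all $\mathfrak{m}_p>r_p$, forcing $P_{\mathbf{a},0}=1$) and, for even $n$ with all $t_i=1$, the choice $a_1=a_2=n/2$, rest $0$, with $b=0$: then $\mathfrak{m}_2=r_2$, $e_2=2$, so the factor at $2$ is $1-(-1)^1/(p-1)^1=1-(-1)/1=2$... — here I must be careful: with $p=2$ the factor $1-(-1)^{e_2-1}/(2-1)^{e_2-1}=1-(-1)/1=2>1$, so $P=2$, but a probability exceeds $1$ only because some earlier obstruction actually forces $0$; the honest statement is that for $e_2$ even and $b=0$ we land in a case where Corollary~\ref{cor_zero} does not give $0$ and formula evaluation gives a value $\geq 1$, so the collision probability is $1$ — this is exactly the phenomenon already used in Theorem~\ref{thm:GRDH e-au} and I would cite that computation verbatim, noting the $b=0$ choice reduces everything to the homogeneous case already handled.

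For the "if" direction (tightness and the value $\tfrac{1}{p-1}$), assume $n$ odd and all $t_i=1$. By Lemma~\ref{lem: pop}(ii) every nonzero $\mathbf{a}$ has some prime $p\mid n$ with $\mathfrak{m}_p\leq r_p$; combining with the per-factor bounds, $\max_{\mathbf{a},b}P_{\mathbf{a},b}(n,\mathbf{t})$ is attained when there is exactly one such prime, it equals $p_{\min}$, the exponent conditions are extremal, and $b$ is chosen to hit the larger of the two possible factors — which one checks is again $\tfrac{1}{p_{\min}-1}$. The witness $a_1=a_2=n/p_{\min}$, $a_3=\cdots=a_k=0$, $b=0$ gives $\mathfrak{m}_{p_{\min}}=r_{p_{\min}}$, $e_{p_{\min}}=2$, collision probability exactly $\tfrac{1}{p_{\min}-1}$, establishing tightness; this also covers $k=1$ since then $\mathbf{a}=\langle a_1\rangle$ with $a_1\in\Z_n\setminus\{0\}$ and for $x_1\in\Z_n^*$ the congruence $a_1x_1\equiv b$ has either $0$ or, summed over $b$, a uniform count, so $P\leq\tfrac{1}{p_{\min}-1}$ still. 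The main obstacle I anticipate is purely bookkeeping: verifying that introducing a nonzero $b$ genuinely cannot push any factor above $\tfrac{1}{p-1}$ — in particular checking the $p^{\mathfrak{m}_p-1}\|\,b$ branch with small $e_p$ and confirming that the supremum over $b$ is still realized (down to equality) by an explicit $\mathbf{a}$, so that "$\varepsilon<1$" and "$\varepsilon=\tfrac{1}{p-1}$ is tight" both hold simultaneously; since $b=0$ is always an admissible distance this reduces to the already-proved Theorem~\ref{thm:GRDH e-au}, so the extra work is minimal.
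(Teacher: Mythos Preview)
Your approach is essentially the paper's: reduce to Theorem~\ref{th_gen_expl}, bound each prime factor in the resulting product by $\tfrac{1}{p-1}$ (distinguishing the $p^{\mathfrak{m}_p}\mid b$ and $p^{\mathfrak{m}_p-1}\|\,b$ branches), and use Lemma~\ref{lem: pop} to control when every factor can be~$1$. The per-factor bounds and the appeal to Lemma~\ref{lem: pop} are fine.

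There is, however, a genuine gap. The theorem is stated for $k\geq 1$ (the paper explicitly flags the change from $k>1$ in Theorem~\ref{thm:GRDH e-au} to $k\geq 1$ here), and your two witnesses both require $k\geq 2$: for even $n$ you take $a_1=a_2=n/2$, and for tightness you take $a_1=a_2=n/p_{\min}$, each with $b=0$. Neither works when $k=1$. Worse, for $k=1$ the choice $b=0$ can \emph{never} produce the needed examples: with $t_1=1$ the congruence $a_1x_1\equiv 0\pmod n$, $(x_1,n)=1$, has no solutions for any $a_1\in\Z_n\setminus\{0\}$, so the $b=0$ probability is always $0$. Your closing remark that ``$b=0$ is always an admissible distance, so the extra work is minimal'' is therefore exactly backwards in the $k=1$ case.

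The fix --- and this is what the paper does --- is to exploit a \emph{nonzero} $b$ in the witnesses, using only one nonzero coordinate of $\mathbf{a}$. For even $n$ with all $t_i=1$, take $a_1=b=n/2$ and $a_2=\cdots=a_k=0$: then $\mathfrak{m}_2=r_2$, $e_2=1$, and $2^{\mathfrak{m}_2-1}\|\,b$, so the factor at $2$ is $2^{-1}\bigl(1-(-1)/(2-1)\bigr)=1$, while every odd prime has $\mathfrak{m}_p>r_p$ and $p^{r_p}\mid b$, giving factor $1$; hence the probability is exactly $1$. For tightness with $n$ odd, take $a_1=b=n/p_{\min}$ and $a_2=\cdots=a_k=0$: then $\mathfrak{m}_{p_{\min}}=r_{p_{\min}}$, $e_{p_{\min}}=1$, $p_{\min}^{\mathfrak{m}_{p_{\min}}-1}\|\,b$, and the factor at $p_{\min}$ is $p_{\min}^{-1}\bigl(1+1/(p_{\min}-1)\bigr)=\tfrac{1}{p_{\min}-1}$, all other factors being $1$. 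Both choices work uniformly for $k\geq 1$.

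A minor point: in your even-$n$ computation you write that the factor at $2$ equals $2$ and then worry about a probability exceeding $1$. You dropped the $p^{\mathfrak{m}_p-r_p-1}=2^{-1}$ prefactor; including it, the factor is exactly $1$, and the probability is $1$ on the nose, with no appeal to obstructions needed.
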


\begin{proof}
Assume the setting of the family GRDH, and that $\mathbf{t}=\langle t_1,\dots,t_k \rangle$ is given. Let $n>1$ and for every prime divisor $p$ of $n$ let $r_p$ be the exponent of $p$ in the prime factorization of $n$. If for a given $\mathbf{a}=\langle a_1,\dots,a_k \rangle \in \mathbb{Z}_n^k \setminus \lbrace \mathbf{0} \rbrace$ and a given $b\in \mathbb{Z}_n$ there is a prime $p\mid n$ such that 
$\mathfrak{m}_p\leq r_p$ and $p^{\mathfrak{m}_p-1}\nmid b$, or, such that $\mathfrak{m}_p\geq r_p+1$ and $p^{r_p}\nmid b$, then, by the first part of Theorem~\ref{th_gen_expl}, the probability that we have 
$\Upsilon_{\mathbf{x}}(\mathbf{m})-\Upsilon_{\mathbf{x}}(\mathbf{m'})=b$ is \textit{exactly zero}. Otherwise, given any $\mathbf{a}=\langle a_1,\dots,a_k \rangle \in \mathbb{Z}_n^k \setminus \lbrace \mathbf{0} \rbrace$ and any $b\in \mathbb{Z}_n$, the probability that we have 
$\Upsilon_{\mathbf{x}}(\mathbf{m})-\Upsilon_{\mathbf{x}}(\mathbf{m'})=b$ is exactly 
\begin{align} \label{col prob given 2}
Q_{\mathbf{a}, b}(n, \mathbf{t})=\mathlarger{\prod}_{\substack{p\,\mid\, n \\ \mathfrak{m}_p \,\leq \, r_p \\ p^{\mathfrak{m}_p} \,\mid\, b}} p^{\mathfrak{m}_p-r_p-1} \left(1-\frac{(-1)^{e_p-1}}{(p-1)^{e_p-1}} \right)\mathlarger{\prod}_{\substack{p\, \mid\, n \\ \mathfrak{m}_p \,\leq \, r_p \\ p^{\mathfrak{m}_p-1} \, \|\, b}} p^{\mathfrak{m}_p-r_p-1} \left(1-\frac{(-1)^{e_p}}{(p-1)^{e_p}}\right).
\end{align}

Now, there are three cases:

\noindent\textit{(i)} If for a prime $p \mid n$ we have $\mathfrak{m}_p\leq r_p$ and $p^{\mathfrak{m}_p-1}\mid\mid b$ then, by (\ref{col prob given 2}), the term corresponding to this $p$ in $Q_{\mathbf{a}, b}(n, \mathbf{t})$ equals
\begin{align*}
p^{\mathfrak{m}_p-r_p-1}\left(1-\frac{(-1)^{e_p}}{(p-1)^{e_p}}\right)\leq p^{r_p-r_p-1}\left(1-\frac{(-1)^{1}}{(p-1)^{1}}\right)=\frac{1}{p-1}.
\end{align*}

\noindent\textit{(ii)} If for a prime $p \mid n$ we have $\mathfrak{m}_p \leq r_p$ and $p^{\mathfrak{m}_p} \mid b$ then, by (\ref{col prob given 2}), the term corresponding to this $p$ in $Q_{\mathbf{a}, b}(n, \mathbf{t})$ equals
\begin{align*}
p^{\mathfrak{m}_p-r_p-1}\left(1-\frac{(-1)^{e_p-1}}{(p-1)^{e_p-1}}\right)\leq p^{r_p-r_p-1}\left(1-\frac{(-1)^{2-1}}{(p-1)^{2-1}}\right)=\frac{1}{p-1}.
\end{align*} 

\noindent\textit{(iii)} If for a prime $p \mid n$ we have $\mathfrak{m}_p > r_p$ and $p^{r_p} \mid b$ then, by (\ref{col prob given 2}), the term corresponding to this $p$ in $Q_{\mathbf{a}, b}(n, \mathbf{t})$ equals 1.

If there exists some $i_0$ such that $t_{i_0}\not=1$ then the argument is exactly the same as before (just take $b=0$). Now, assume that $t_i=1$ ($1\leq i\leq k$). Then, if $n$ is even, take $a_1=b=\frac{n}{2}$ and $a_2=\cdots=a_k=0$. Now, one can see that, by (\ref{col prob given 2}) and case (iii) above, the probability that we have $\Upsilon_{\mathbf{x}}(\mathbf{m})-\Upsilon_{\mathbf{x}}(\mathbf{m'})=b$ for these specific $\mathbf{a}$ and $b$ is \textit{exactly one}.

Now, suppose that $n$ is odd and $t_i=1$ ($1\leq i\leq k$). Then, by (\ref{col prob given 2}), 
Lemma~\ref{lem: pop}(ii), and cases (i), (ii), (iii) above, one can see that 
$$
\max_{\substack{\mathbf{a}=\mathbf{m}-\mathbf{m}' \in \mathbb{Z}_n^k \setminus \lbrace \mathbf{0} \rbrace \\ b\in \mathbb{Z}_n}} Q_{\mathbf{a}, b}(n, \mathbf{t})
$$ 
is achieved in a specific $\mathbf{a}=\langle a_1,\dots,a_k \rangle \in \mathbb{Z}_n^k \setminus \lbrace \mathbf{0} \rbrace$ and a specific $b\in \mathbb{Z}_n$ for which there exists \textit{exactly one} 
prime $p\mid n$ such that $\mathfrak{m}_p(a_1,\ldots,a_k)\leq r_p$ and 
$p^{\mathfrak{m}_p-1}\mid\mid b$, or, $\mathfrak{m}_p(a_1,\ldots,a_k)\leq r_p$ and 
$p^{\mathfrak{m}_p} \mid b$, and also $p^{r_p} \mid b$ for every other prime $p\mid n$; furthermore, $p$ has to be the smallest prime divisor of $n$ that we denote by $p_{\min}$.

Consequently, if $n$ is odd and $(x_i,n)=t_i=1$ ($1\leq i\leq k$) then for any two distinct messages 
$\mathbf{m}, \mathbf{m}' \in \mathbb{Z}_n^k$, and all $b\in \mathbb{Z}_n$, we have
\begin{align*}
\text{Pr}_{\Upsilon_{\mathbf{x}} \leftarrow \text{GRDH}}[\Upsilon_{\mathbf{x}}(\mathbf{m})-\Upsilon_{\mathbf{x}}(\mathbf{m'})=b] \leq \max_{\substack{\mathbf{a}=\mathbf{m}-\mathbf{m}' \in \mathbb{Z}_n^k \setminus \lbrace \mathbf{0} \rbrace \\ b\in \mathbb{Z}_n}} Q_{\mathbf{a}, b}(n, \mathbf{t})\leq \frac{1}{p_{\min}-1}\leq \frac{1}{2}.
\end{align*}
Therefore, if $n$ is odd and $(x_i,n)=t_i=1$ ($1\leq i\leq k$) then GRDH (which is then reduced to RDH) is $\frac{1}{p_{\min}-1}$-A$\Delta$U. We also note that this bound is tight: take $a_1=b=\frac{n}{p_{\min}}$ and $a_2=\cdots=a_k=0$. Now, by (\ref{col prob given 2}) and case (iii) above, one can see that the probability that we have $\Upsilon_{\mathbf{x}}(\mathbf{m})-\Upsilon_{\mathbf{x}}(\mathbf{m'})=b$ for these specific $\mathbf{a}$ and $b$ is \textit{exactly} $\frac{1}{p_{\min}-1}$.
\end{proof}

\begin{rema}
While the proofs of Theorem~\ref{thm:GRDH e-au} and Theorem~\ref{thm:GRDH e-adeltau} are simple thanks to Theorem~\ref{th_gen_expl}, but there may be other simpler proofs (say, without relying on the counting arguments as we do) for these results. However, given the general statements of Theorem~\ref{thm:GRDH e-au} and Theorem~\ref{thm:GRDH e-adeltau}, possible simpler proofs for these results which cover the `whole' statements may not be necessarily that shorter. Besides, we believe that our proof techniques have their own merit and these connections and techniques may motivate more work in universal hashing and related areas.
\end{rema}

\begin{rema}
If in Theorem~\ref{thm:GRDH e-adeltau} we let $k=1$, then we get the main result of the paper by Alomair et al. \cite[Th. 5.11]{ACP} which was obtained via a very long argument.
\end{rema}

\begin{rema}
Using Theorem~\ref{th_gen_expl} and the idea of the proof of Theorem~\ref{thm:GRDH e-adeltau} one can see that there are cases in which the collision probability in the family \textnormal{GRDH} is `exactly zero' (Corollary~\ref{cor_zero} completely characterizes all these cases). This can be considered as an advantage of the family \textnormal{GRDH} and is not the case in the family \textnormal{MMH}$^*$, as the collision probability in \textnormal{MMH}$^*$ is always exactly $\frac{1}{p}$ which never vanishes.
\end{rema}

\section{Applications to authentication with secrecy}\label{Sec_4}

As an application of the results of the preceding section, we propose an authentication code with secrecy scheme which generalizes a recent construction \cite{ACP, AP}. We remark that Alomair et al. have applied their scheme in several other papers; see, e.g., \cite{ALP} for an application of this approach in the authentication problem in RFID systems. So, our results may have implications in those applications, as well. We adopt the notation of \cite{MSY} in specifying the syntax of these codes. In particular, we consider key-indexed families of coding rules.

An {\em authentication code with secrecy} (or {\em code} for short) is a tuple $\mathrm{C} = (\So,\Me,\K, \Enc,\Dec)$, specified by the following sets: $\So$  of {\em source states} (or {\em plaintexts}), $\Me$  of {\em messages} (or {\em ciphertexts}), $\K$ of {\em keys}, $\Enc$ of {\em authenticated encryption (AE) functions} and $\Dec$ of {\em verified decryption functions}. The sets $\Enc$ and $\Dec$ are indexed by 
$\K$. For $k \in \K$, $\Enc_k:\So\rightarrow\Me$ is the associated authenticated encryption function
and $D_k:\Me\rightarrow\So\cup\{\bot\}$ is the associated verified decryption function. The encryption and decryption functions have the property that for every $m \in \So$,  $\Dec_k(\Enc_k(m)) = m$. Moreover, for any $c \in \Me$, if $c \ne \Enc_k(m)$ for some $m \in \So$, $\Dec_k(c)=\bot$.

Before presenting our construction, we first note that although it is not explicitly stated in \cite{ACP, AP}, the construction given there is correct only for the case of a uniform distribution on source states. This will be the case for our construction, as well. We note that this assumption, while unrealistically strong from a security perspective, is commonly used in the study of authentication codes with secrecy.
Following the terminology of \cite{HubEq} (see also \cite{Hub2}), we will call such codes
{\em authentication and secrecy codes for equiprobable source probability distributions}. Henceforth we will work under the assumption of equiprobable source states.

We now give the security definitions required for authentication and secrecy. We begin with a definition of secrecy.
\begin{definition}
We say that $\mathrm{C}=(\So,\Me,\K,\Enc,\Dec)$ provides $\varepsilon$-{\em secrecy} on $\So'\subseteq \So$ if
every $m\in\So'$ and $c\in\Me$,
\[
\Pr_{m'\leftarrow\So,k\leftarrow\K}[m'=m|\Enc_k(m')=c]\le \varepsilon.
\]
Thus, $\frac{1}{|\So|}$-secrecy on $\So$ corresponds to the standard notion of Shannon secrecy \cite{SHA} 
(for a uniform message distribution).

With respect to authentication, we restrict attention to {\em substitution attacks}, also known as {\em spoofing attacks of order} 1. A $\mathrm{C}$-{\em forger} is a mapping $\mathcal{F}:\Me \rightarrow \Me$. Note that there are no computational restrictions on $\mathcal{F}$. We say that $\mathrm{C}$ is $\delta$-{\em secure against substitution attacks} if for every $\mathrm{C}$-forger $\mathcal{F}$,
\[
\Pr_{m\leftarrow\So,k\leftarrow\K,c\leftarrow\Enc_k(m)}[\mathcal{F}(c)\ne c \wedge \Dec_k(\mathcal{F}(c))\ne \bot] \le \delta.
\]
Finally, we say that $\mathrm{C}$ is  an $\varepsilon,\delta$-{\em authentication code with secrecy for equiprobable source states} on $\So'$ if it is $\varepsilon$-secret on $\So'$ and $\delta$-secure against substitution attacks.
\end{definition}

For any $n,k\in\mathbb{N}$, we define  $\mathrm{C}^{n,k}_{\mathrm{RDH}}$ as follows:  $\So=\Z_n^k$, $\K=\Z^k_n\times(\Z^*_n)^k$, $\Me=\Z_n^k\times \Z_n$. Thus, source states are $k$-tuples $\mathbf{m}=\langle m_1,\dots,m_k\rangle$, keys are pairs $\langle \mathbf{x},\mathbf{y}\rangle$ of $k$-tuples
$\mathbf{x}=\langle x_1,\dots,x_k\rangle$, $\mathbf{y}=\langle y_1,\dots,y_k\rangle$, and ciphertexts are pairs $\langle \mathbf{c},t\rangle$.

Note that we will sometimes write pairs using the notation $\cdot||\cdot$ rather than the usual $\langle \cdot,\cdot\rangle$, e.g., we write a key pair as $\mathbf{x}||\mathbf{y}$. Also, we may abuse terminology, and for a ciphertext $\mathbf{c}||t$, call $\mathbf{c}$ the ciphertext and $t$ the {\em tag}.
The authenticated encryption function $\Enc$ is defined as follows:
\[
\Enc_{\mathbf{x}||\mathbf{y}}(\mathbf{m})= \Psi_{\mathbf{x}}(\mathbf{m})||\Upsilon_{\mathbf{y}}(\mathbf{m}),
\]
where $\Upsilon$ is the RDH hash function, and
\[
\Psi_{\mathbf{x}}(\mathbf{m}) = \mathbf{m} + \mathbf{x} \pmod{n} = \langle m_1+x_1 \pmod{n}, \ldots, m_k+x_k \pmod{n}\rangle.
\]
To define $\Dec$, we first define $\Psi^{-1}$:
\[
\Psi^{-1}_{\mathbf{x}}(\mathbf{c}) = \mathbf{c} - \mathbf{x} \pmod{n} = \langle c_1-x_1 \pmod{n}, \ldots, c_k-x_k \pmod{n}\rangle.
\]
Then
\[
\Dec_{\mathbf{x}||\mathbf{y}}(\mathbf{c}||t)=
\left\{\begin{array}{ll}
\Psi^{-1}_{\mathbf{x}}(\mathbf{c}) & \text{if $\Upsilon_{\mathbf{y}}(\Psi^{-1}_{\mathbf{x}}(\mathbf{c}))=t$};\\
\bot & \text{otherwise}.
\end{array}\right.
\]

Now, we are ready to state and prove our main result in this section:

\begin{theorem} \label{authenc-main}
Let $n,k \in \mathbb{N}$, where $n$ is odd, and $p$ the smallest prime divisor of
$n$. Then $\mathrm{C}^{n,k}_{\mathrm{RDH}}$ is a $\frac{1}{(p-1)n^{k-1}},\frac{1}{p-1}$-authentication code with secrecy for equiprobable source states on $\Z_{n}^k \setminus \{\mathbf{0}\}$.
\end{theorem}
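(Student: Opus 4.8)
The plan is to prove the two halves of the statement separately, and in both cases to reduce the quantity of interest to a ratio of solution-counts of restricted linear congruences $\sum_{i=1}^k y_i m_i\equiv t\pmod n$ with $\mathbf{y}=\langle y_1,\dots,y_k\rangle\in(\Z_n^*)^k$, applying Theorem~\ref{thm:GRDH e-adeltau} (in the RDH case $t_i=1$) for the key upper bound. For $\mathbf{m}\in\Z_n^k$ and $t\in\Z_n$ write $N_{\mathbf{m},t}$ for the number of $\mathbf{y}\in(\Z_n^*)^k$ with $\Upsilon_{\mathbf{y}}(\mathbf{m})=t$. Two facts will be used throughout. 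First, for each fixed $\mathbf{y}\in(\Z_n^*)^k$ the map $\Upsilon_{\mathbf{y}}\colon\Z_n^k\to\Z_n$ is a surjective group homomorphism (some $y_i$ is a unit), so every fibre has exactly $n^{k-1}$ elements (this is also the $t_i=1$, unit-coefficient case of Proposition~\ref{Prop: lin cong}); summing over $\mathbf{y}$ gives the identity $\sum_{\mathbf{m}\in\Z_n^k}N_{\mathbf{m},t}=\varphi(n)^k n^{k-1}$ for every $t$. Second, if $\mathbf{m}\ne\mathbf{0}$ then for any fixed $\mathbf{z}\in\Z_n^k$ one has $N_{\mathbf{m},t}/\varphi(n)^k=\Pr_{\mathbf{y}\leftarrow(\Z_n^*)^k}[\Upsilon_{\mathbf{y}}(\mathbf{z}+\mathbf{m})-\Upsilon_{\mathbf{y}}(\mathbf{z})=t]$ by linearity of $\Upsilon_{\mathbf{y}}$, and since $\mathbf{z}+\mathbf{m}\ne\mathbf{z}$ and $n$ is odd, Theorem~\ref{thm:GRDH e-adeltau} yields $N_{\mathbf{m},t}\le\varphi(n)^k/(p-1)$.

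For secrecy I would fix $\mathbf{m}\in\Z_n^k\setminus\{\mathbf{0}\}$ and a ciphertext $\mathbf{c}||t\in\Me$, and evaluate $\Pr_{\mathbf{m}'\leftarrow\Z_n^k,\,\mathbf{x}||\mathbf{y}\leftarrow\K}[\mathbf{m}'=\mathbf{m}\mid\Enc_{\mathbf{x}||\mathbf{y}}(\mathbf{m}')=\mathbf{c}||t]$ by Bayes' rule. The key component $\mathbf{x}$ enters only through the one-time pad $\Psi_{\mathbf{x}}$, so conditioning on $\Psi_{\mathbf{x}}(\mathbf{m}')=\mathbf{c}$ simply fixes $\mathbf{x}=\mathbf{c}-\mathbf{m}'$ and contributes a factor $1/n^k$ independent of $\mathbf{m}'$; hence the conditional probability equals $N_{\mathbf{m},t}\big/\sum_{\mathbf{m}_0\in\Z_n^k}N_{\mathbf{m}_0,t}=N_{\mathbf{m},t}/(\varphi(n)^k n^{k-1})$, which is at most $\frac{1}{(p-1)n^{k-1}}$ by the second fact above. (This is exactly why $\mathbf{0}$ must be excluded from $\So'$: $N_{\mathbf{0},t}=\varphi(n)^k$ for $t=0$, giving the larger value $1/n^{k-1}$. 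Note that the source distribution is still uniform on all of $\Z_n^k=\So$.)

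For security against substitution let $\mathcal{F}$ be a $\mathrm{C}$-forger and, for an observed $\mathbf{c}||t$, write $\mathcal{F}(\mathbf{c}||t)=\mathbf{c}'||t'$. Since $\Dec_{\mathbf{x}||\mathbf{y}}(\mathbf{c}'||t')\ne\bot$ means $\Upsilon_{\mathbf{y}}(\mathbf{c}'-\mathbf{x})=t'$, and $\mathbf{x}=\mathbf{c}-\mathbf{m}$ on the event that $\mathbf{c}||t$ was produced from source state $\mathbf{m}$, linearity rewrites this as $\Upsilon_{\mathbf{y}}(\mathbf{m})+\Upsilon_{\mathbf{y}}(\mathbf{d})=t'$ with $\mathbf{d}=\mathbf{c}'-\mathbf{c}$; combined with $\Upsilon_{\mathbf{y}}(\mathbf{m})=t$ this says $\Upsilon_{\mathbf{y}}(\mathbf{d})=t'-t$. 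If $\mathbf{d}=\mathbf{0}$ then necessarily $t'\ne t$, and the condition forces $t=\Upsilon_{\mathbf{y}}(\mathbf{m})=t'$, impossible, so such forgeries never verify. If $\mathbf{d}\ne\mathbf{0}$, I would group the key/source triples $(\mathbf{m}_0,\mathbf{x}_0,\mathbf{y}_0)$ by the value of $\Enc$: each $\mathbf{c}||t$ is obtained from exactly $\sum_{\mathbf{m}_0}N_{\mathbf{m}_0,t}=\varphi(n)^k n^{k-1}$ such triples (total probability $1/n^{k+1}$), and among them the number also satisfying the decryption condition is $\sum_{\mathbf{m}_0\in\Z_n^k}\#\{\mathbf{y}_0\in(\Z_n^*)^k:\Upsilon_{\mathbf{y}_0}(\mathbf{m}_0)=t,\ \Upsilon_{\mathbf{y}_0}(\mathbf{d})=t'-t\}$, which on interchanging the order of summation and using the fibre count equals $n^{k-1}N_{\mathbf{d},t'-t}$. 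Hence $\Pr[\Dec_{\mathbf{x}||\mathbf{y}}(\mathcal{F}(\mathbf{c}||t))\ne\bot\mid\mathbf{c}||t]=N_{\mathbf{d},t'-t}/\varphi(n)^k\le\frac{1}{p-1}$ by the second fact (here $\mathbf{d}\ne\mathbf{0}$). Summing $\Pr[\mathbf{c}||t\text{ observed}]\cdot\Pr[\mathcal{F}\text{ verifies}\mid\mathbf{c}||t]$ over the $n^{k+1}$ ciphertexts, each of probability $1/n^{k+1}$, gives $\delta\le\frac{1}{p-1}$ uniformly in $\mathcal{F}$.

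The routine part is the conditional-probability bookkeeping; the only thing needing care is checking that the normalizing factors — the $1/n^k$ from the pad, the $1/\varphi(n)^k$ from the choice of $\mathbf{y}$, and the fibre count $n^{k-1}$ — cancel exactly, and that the forger's ability to choose $\mathbf{c}'||t'$ as a function of the observed ciphertext is harmless, since the bound of Theorem~\ref{thm:GRDH e-adeltau} holds for \emph{every} nonzero difference $\mathbf{d}$ and every target value, hence survives the worst case. I expect the main conceptual obstacle to be simply recognizing that both security properties collapse, via the identity $\sum_{\mathbf{m}_0}N_{\mathbf{m}_0,t}=\varphi(n)^k n^{k-1}$, to the single worst-case $\Delta$-collision bound already established; no new number-theoretic input beyond Theorem~\ref{thm:GRDH e-adeltau} is required.
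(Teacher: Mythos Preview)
Your proposal is correct. The secrecy argument is essentially identical to the paper's first lemma: both use Bayes' rule, the Shannon secrecy of the one-time pad $\Psi$, the uniformity of $\Upsilon_{\mathbf{y}}(\mathbf{m}')$ over $\Z_n$ (your identity $\sum_{\mathbf{m}_0}N_{\mathbf{m}_0,t}=\varphi(n)^k n^{k-1}$), and then bound the remaining factor $N_{\mathbf{m},t}/\varphi(n)^k$ via Theorem~\ref{thm:GRDH e-adeltau}.

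For substitution security the paper proceeds slightly differently: it first isolates a \emph{key-hiding lemma} stating that the posterior distribution of $\mathbf{y}$ given the observed ciphertext $\mathbf{c}||t$ is exactly uniform on $(\Z_n^*)^k$, and then argues (by averaging/contradiction) that a successful forgery yields distinct $\mathbf{m}\ne\mathbf{m}'$ with $\Upsilon_{\mathbf{y}}(\mathbf{m})-\Upsilon_{\mathbf{y}}(\mathbf{m}')=b$ for a freshly uniform $\mathbf{y}$, contradicting the $\frac{1}{p-1}$-A$\Delta$U bound. Your direct counting argument is equivalent: the fibre identity $\#\{\mathbf{m}_0:\Upsilon_{\mathbf{y}_0}(\mathbf{m}_0)=t\}=n^{k-1}$ for every $\mathbf{y}_0\in(\Z_n^*)^k$ is precisely what makes the posterior on $\mathbf{y}$ uniform, so your interchange-of-summation step is the key-hiding lemma in disguise. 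What your packaging buys is that the computation is fully explicit and the degenerate case $\mathbf{d}=\mathbf{0}$ is handled cleanly (the paper's ``it must be the case that $\mathbf{m}'\ne\mathbf{m}$'' requires the same observation you make, but leaves it implicit); what the paper's packaging buys is that the key-hiding property is stated as a reusable standalone fact. Neither approach needs any number theory beyond Theorem~\ref{thm:GRDH e-adeltau}.
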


We will establish this theorem by the following sequence of lemmas.

\begin{lemma}
Let $n,k \in \mathbb{N}$, where $n$ is odd, and $p$ the smallest prime divisor of $n$. Then $\mathrm{C}^{n,k}_{\mathrm{RDH}}$ is $\frac{1}{(p-1)n^{k-1}}$-secret on $\Z_n^k\setminus \{\mathbf{0}\}$.
\end{lemma}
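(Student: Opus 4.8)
The plan is to compute the conditional probability directly from the structure of the encryption function. Fix $\mathbf{m}\in\Z_n^k\setminus\{\mathbf{0}\}$ and a ciphertext $\mathbf{c}||t\in\Me$. Since $m'\leftarrow\So=\Z_n^k$ and $k=\langle\mathbf{x},\mathbf{y}\rangle\leftarrow\K=\Z_n^k\times(\Z_n^*)^k$ are chosen independently and uniformly, I would expand
\[
\Pr[m'=\mathbf{m}\mid \Enc_{\mathbf{x}||\mathbf{y}}(m')=\mathbf{c}||t]=\frac{\Pr[m'=\mathbf{m}\ \wedge\ \Psi_{\mathbf{x}}(\mathbf{m})=\mathbf{c}\ \wedge\ \Upsilon_{\mathbf{y}}(\mathbf{m})=t]}{\Pr[\Psi_{\mathbf{x}}(m')=\mathbf{c}\ \wedge\ \Upsilon_{\mathbf{y}}(m')=t]}.
\]
For the numerator, $m'=\mathbf{m}$ has probability $n^{-k}$; given that, $\Psi_{\mathbf{x}}(\mathbf{m})=\mathbf{c}$ pins down $\mathbf{x}=\mathbf{c}-\mathbf{m}$ uniquely, contributing a factor $n^{-k}$; and the event $\Upsilon_{\mathbf{y}}(\mathbf{m})=t$ over random $\mathbf{y}\in(\Z_n^*)^k$ has probability exactly $P^{\Delta}(\mathbf{m},t)$, say, the fraction of keys $\mathbf{y}$ with $\mathbf{m}\cdot\mathbf{y}\equiv t\pmod n$. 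So the numerator is $n^{-2k}P^{\Delta}(\mathbf{m},t)$.

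For the denominator I would sum over the (unknown) source state: $\Pr[\Psi_{\mathbf{x}}(m')=\mathbf{c}\wedge\Upsilon_{\mathbf{y}}(m')=t]=\sum_{\mathbf{m}''\in\Z_n^k}\Pr[m'=\mathbf{m}'']\Pr[\mathbf{x}=\mathbf{c}-\mathbf{m}'']\Pr[\Upsilon_{\mathbf{y}}(\mathbf{m}'')=t]=n^{-2k}\sum_{\mathbf{m}''}P^{\Delta}(\mathbf{m}'',t)$. Hence the conditional probability equals $P^{\Delta}(\mathbf{m},t)/\sum_{\mathbf{m}''}P^{\Delta}(\mathbf{m}'',t)$. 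Since $\mathbf{m}\ne\mathbf{0}$, Theorem~\ref{thm:GRDH e-adeltau} (with $t_i=1$, and taking the two messages to be $\mathbf{m}$ and $\mathbf{0}$, so $\mathbf{a}=\mathbf{m}$) gives $P^{\Delta}(\mathbf{m},t)\le\frac{1}{p-1}$ — note the collision-probability formula $Q_{\mathbf{a},b}(n,\mathbf{t})$ there is exactly our $P^{\Delta}$. To finish I need a lower bound on the denominator: I would isolate the term $\mathbf{m}''=\mathbf{0}$, for which $\Upsilon_{\mathbf{y}}(\mathbf{0})=0$ always, so $P^{\Delta}(\mathbf{0},t)$ is $1$ if $t=0$ and $0$ otherwise. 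This dichotomy forces a case split on whether $t=0$.

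The case $t\ne 0$: here the $\mathbf{m}''=\mathbf{0}$ term vanishes, so I instead bound $\sum_{\mathbf{m}''}P^{\Delta}(\mathbf{m}'',t)$ from below by a counting identity — summing $\#\{\mathbf{y}\in(\Z_n^*)^k:\mathbf{m}''\cdot\mathbf{y}=t\}$ over all $\mathbf{m}''\in\Z_n^k$ just counts pairs, giving $\sum_{\mathbf{m}''}P^{\Delta}(\mathbf{m}'',t)=\frac{1}{\varphi(n)^k}\#\{(\mathbf{m}'',\mathbf{y}):\mathbf{m}''\cdot\mathbf{y}=t\}=n^{k-1}$, since for each fixed $\mathbf{y}\in(\Z_n^*)^k$ the linear equation $\mathbf{m}''\cdot\mathbf{y}=t$ has exactly $n^{k-1}$ solutions $\mathbf{m}''$ (Proposition~\ref{Prop: lin cong}, as $\gcd(y_1,\dots,y_k,n)=1$). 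The case $t=0$: the denominator is then $\ge P^{\Delta}(\mathbf{0},0)=1\ge n^{k-1}$ trivially when... actually the identity above gives the same value $n^{k-1}$ uniformly in $t$, so no genuine split is needed — the pairs-counting identity handles all $t$ at once. Combining, the conditional probability is $\le\frac{1/(p-1)}{n^{k-1}}=\frac{1}{(p-1)n^{k-1}}$, which is the claim.

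I expect the main obstacle to be bookkeeping the independence and the normalization carefully — in particular making sure the denominator is evaluated as a genuine probability (so that dividing is legitimate, i.e. it is nonzero, which the identity $\sum_{\mathbf{m}''}P^{\Delta}(\mathbf{m}'',t)=n^{k-1}>0$ guarantees) and that the quantity $P^{\Delta}(\mathbf{m},t)$ appearing from the RDH hash is literally the one bounded in Theorem~\ref{thm:GRDH e-adeltau}. The number-theoretic input (Proposition~\ref{Prop: lin cong} for the pair count, Theorem~\ref{thm:GRDH e-adeltau} for the numerator bound) is then plug-and-play.
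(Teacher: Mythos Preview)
Your argument is correct and follows essentially the same route as the paper's proof: both reduce the conditional probability via Bayes to $\Pr_{\mathbf{y}}[\Upsilon_{\mathbf{y}}(\mathbf{m})=t]$ divided by a normalizing factor of $n^{k-1}$, then bound the numerator by $\tfrac{1}{p-1}$ via Theorem~\ref{thm:GRDH e-adeltau}. The only cosmetic difference is that the paper first strips off the $\Psi$ part by invoking Shannon secrecy and then cites the uniformity of $\Upsilon_{\mathbf{y}}(\mathbf{m}')$ over random $\mathbf{m}',\mathbf{y}$ to get $\Pr[\Upsilon_{\mathbf{y}}(\mathbf{m}')=t]=1/n$ directly, whereas you expand numerator and denominator explicitly and recover the same $n^{k-1}$ via the pair-counting identity (which is exactly the uniformity statement unpacked); your initial case split on $t=0$ is unnecessary, as you yourself note.
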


\begin{proof}
We first note that for any $\mathbf{m}$, $\mathbf{c}$, and $t$,
\[
\Pr_{\mathbf{m}',\mathbf{x}\leftarrow\Z_n^k,\mathbf{y}\leftarrow(\Z^*_n)^k}[\mathbf{m'}=\mathbf{m}|\Enc_{\mathbf{x}||\mathbf{y}}(\mathbf{m}')=\mathbf{c}||t]
=
\Pr_{\mathbf{m}'\leftarrow\Z_n^k,\mathbf{y}\leftarrow(\Z^*_n)^k}[\mathbf{m'}=\mathbf{m}|\Upsilon_{\mathbf{y}}(\mathbf{m}')=t].
\]
This follows from the independence of $\Psi_\mathbf{x}(\mathbf{m}')$ and $\Upsilon_\mathbf{y}(\mathbf{m'})$,  conditioned on $\mathbf{m}'=\mathbf{m}$, along with the fact that $\Psi$ provides Shannon secrecy. But
\begin{align*}
\Pr_{\mathbf{m}'\leftarrow\Z_n^k,\mathbf{y}\leftarrow(\Z^*_n)^k}[\mathbf{m'}=\mathbf{m}|\Upsilon_{\mathbf{y}}(\mathbf{m}')=t]
&=\Pr_{\mathbf{m}'\leftarrow\Z_n^k,\mathbf{y}\leftarrow(\Z^*_n)^k}[\Upsilon_{\mathbf{y}}(\mathbf{m}')=t|\mathbf{m'}=\mathbf{m}]/n^{k-1}\\
&\le \frac{1}{(p-1)n^{k-1}},
\end{align*}
where the equality follows by Bayes' rule and the fact that
for
$\mathbf{m'}\leftarrow(\Z_n)^k$ and $\mathbf{y}\leftarrow(\Z^*_n)^k$,
$\Upsilon_{\mathbf{y}}(\mathbf{m'})$ is uniformly distributed in $\Z_n$,
and the inequality follows, assuming $\mathbf{m}\ne\mathbf{0}$, by Theorem~\ref{thm:GRDH e-adeltau}. 
\end{proof}

We now establish a {\em key hiding} property which will be needed to prove resistance to substitution attacks.

\begin{lemma}
For $n,k \in \mathbb{N}$, $\mathbf{y}\in(\Z^*_n)^k$, $\mathbf{c}\in\Z_n^k$
and $t \in \Z_n$,
\[
\Pr_{\mathbf{x},\mathbf{m}\in\Z_n^k,\mathbf{y}'\in(\Z^*_n)^k}
[\mathbf{y}'=\mathbf{y}|\Enc_{\mathbf{x}||\mathbf{y}'}(\mathbf{m})=\mathbf{c}||t]=\frac{1}{|(\Z^*_n)^k|}.
\]
\end{lemma}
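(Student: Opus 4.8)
The plan is to show that conditioning on the event $\Enc_{\mathbf{x}\|\mathbf{y}'}(\mathbf{m})=\mathbf{c}\|t$ leaves $\mathbf{y}'$ uniformly distributed on $(\Z_n^*)^k$. I would first unpack the event: $\Enc_{\mathbf{x}\|\mathbf{y}'}(\mathbf{m})=\mathbf{c}\|t$ holds if and only if $\Psi_{\mathbf{x}}(\mathbf{m})=\mathbf{c}$ \emph{and} $\Upsilon_{\mathbf{y}'}(\mathbf{m})=t$. The first conjunct depends only on $\mathbf{x}$ and $\mathbf{m}$ and is independent of $\mathbf{y}'$; since $\mathbf{x}$ is chosen uniformly from $\Z_n^k$, for each fixed $\mathbf{m}$ there is exactly one value of $\mathbf{x}$ making $\Psi_{\mathbf{x}}(\mathbf{m})=\mathbf{c}$, so this conjunct merely contributes a factor that is the same for every $\mathbf{y}'$ and can be factored out. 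Hence, by independence of $\mathbf{x}$ and $(\mathbf{m},\mathbf{y}')$, the conditional distribution of $\mathbf{y}'$ given the full event equals its conditional distribution given just $\Upsilon_{\mathbf{y}'}(\mathbf{m})=t$, with $\mathbf{m}\leftarrow\Z_n^k$ and $\mathbf{y}'\leftarrow(\Z_n^*)^k$.

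Next I would compute this latter conditional probability via Bayes' rule:
\[
\Pr[\mathbf{y}'=\mathbf{y}\mid \Upsilon_{\mathbf{y}'}(\mathbf{m})=t]
=\frac{\Pr[\Upsilon_{\mathbf{y}}(\mathbf{m})=t\mid \mathbf{y}'=\mathbf{y}]\cdot \Pr[\mathbf{y}'=\mathbf{y}]}{\Pr[\Upsilon_{\mathbf{y}'}(\mathbf{m})=t]}.
\]
The prior $\Pr[\mathbf{y}'=\mathbf{y}]$ is $1/|(\Z_n^*)^k|$. For the numerator's first factor, with $\mathbf{y}$ fixed and $\mathbf{m}\leftarrow\Z_n^k$, the quantity $\Upsilon_{\mathbf{y}}(\mathbf{m})=\sum_i m_i y_i \pmod n$ is uniformly distributed on $\Z_n$ — since each $y_i\in\Z_n^*$ is a unit, $m_1 y_1$ alone is already uniform on $\Z_n$ as $m_1$ ranges over $\Z_n$, and the remaining terms are independent of it — so this factor is $1/n$. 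For the denominator, averaging over $\mathbf{y}'\leftarrow(\Z_n^*)^k$ and $\mathbf{m}\leftarrow\Z_n^k$, $\Upsilon_{\mathbf{y}'}(\mathbf{m})$ is again uniform on $\Z_n$ (it is uniform for every fixed $\mathbf{y}'$, hence uniform overall), so the denominator is also $1/n$. The $1/n$ factors cancel and we are left with exactly $1/|(\Z_n^*)^k|$, which is the claim.

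I do not expect a serious obstacle here; the lemma is essentially a formalization of the one-time-pad/independence structure of the construction. The one point requiring a little care is justifying cleanly that conditioning on $\Psi_{\mathbf{x}}(\mathbf{m})=\mathbf{c}$ can be dropped — i.e.\ that the joint law of $(\mathbf{x},\mathbf{m},\mathbf{y}')$ factors so that the $\mathbf{x}$-part is conditionally independent of $\mathbf{y}'$ given $\mathbf{m}$, and that $\Pr[\Psi_{\mathbf{x}}(\mathbf{m})=\mathbf{c}\mid \mathbf{m}]$ is a nonzero constant ($=1/n^k$) not depending on $\mathbf{y}'$. This is exactly the same ``Shannon secrecy of $\Psi$'' observation already invoked in the preceding lemma, so I would simply cite that reasoning. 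The only other thing to verify is that all conditioning events have positive probability, which follows from the uniformity computations above (the event $\Upsilon_{\mathbf{y}'}(\mathbf{m})=t$ has probability $1/n>0$). With these in hand the computation is routine.
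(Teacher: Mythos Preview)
Your proposal is correct and follows essentially the same approach as the paper: reduce to conditioning on $\Upsilon_{\mathbf{y}'}(\mathbf{m})=t$ by exploiting the independence of the $\Psi$-part from $\mathbf{y}'$, then apply Bayes' rule together with the observation that $\Upsilon_{\mathbf{y}}(\mathbf{m})$ is uniform on $\Z_n$ for any fixed $\mathbf{y}\in(\Z_n^*)^k$ when $\mathbf{m}\leftarrow\Z_n^k$. Your justification for dropping the $\Psi$-conditioning (one $\mathbf{x}$ per $\mathbf{m}$, contributing a constant factor $1/n^k$) is in fact a bit more explicit than the paper's, which simply invokes independence of $\Psi_{\mathbf{x}}(\mathbf{m})$ and $\mathbf{y}'$.
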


\begin{proof}
First note that since $\mathbf{x}$ and $\mathbf{m}$ are chosen independently of
$\mathbf{y}'$, it is the case that $\Psi_\mathbf{x}(\mathbf{m})$ and $\mathbf{y}'$ are independent.
So we just need to show that
\[ \Pr_{\mathbf{m}\in\Z_n^k,\mathbf{y}'\in(\Z^*_n)^k}
[\mathbf{y}'=\mathbf{y}|\Upsilon_{\mathbf{y}'}(\mathbf{m})=t] = \frac{1}{|(\Z^*_n)^k|}.
\]
Note that
\begin{align*}
\Pr_{\mathbf{m}\in\Z_n^k,\mathbf{y}'\in(\Z^*_n)^k}
[\Upsilon_{\mathbf{y}'}(\mathbf{m})=t|\mathbf{y}'=\mathbf{y}] &= \Pr_{\mathbf{m}\in\Z_n^k,\mathbf{y}'\in(\Z^*_n)^k}
[\Upsilon_{\mathbf{y}'}(\mathbf{m})=t \wedge \mathbf{y}'=\mathbf{y}]/
\Pr_{\mathbf{y}'\in(\Z^*_n)^k}
[\mathbf{y}'=\mathbf{y}]\\
&=\Pr_{\mathbf{m}\in\Z_n^k,\mathbf{y}'\in(\Z^*_n)^k}
[\Upsilon_{\mathbf{y}}(\mathbf{m})=t \wedge \mathbf{y}'=\mathbf{y}]/
\Pr_{\mathbf{y}'\in(\Z^*_n)^k}
[\mathbf{y}'=\mathbf{y}]\\
&=\Pr_{\mathbf{m}\in\Z_n^k}
[\Upsilon_{\mathbf{y}}(\mathbf{m})=t]\cdot\Pr_{\mathbf{y}'\in(\Z^*_n)^k} [\mathbf{y}'=\mathbf{y}]/
\Pr_{\mathbf{y}'\in(\Z^*_n)^k}
[\mathbf{y}'=\mathbf{y}]\\
&=\Pr_{\mathbf{m}\in\Z_n^k}
[\Upsilon_{\mathbf{y}}(\mathbf{m})=t]=\frac{1}{|\Z_n|},
\end{align*}
where the last equality follows because the product of a uniformly random element
of $\Z_n$ and a fixed element of $\Z^*_n$ is uniformly distributed in $\Z_n$, and
the sum of a fixed number of uniformly random elements of $\Z_n$ is uniformly
distributed in $\Z_n$.
We now have
\begin{multline}
\label{kh-Bayes}
\Pr_{\mathbf{m}\in\Z_n^k,\mathbf{y}'\in(\Z^*_n)^k}
[\mathbf{y}'=\mathbf{y}|\Upsilon_{\mathbf{y}'}(\mathbf{m})=t]\\ =
\Pr_{\mathbf{m}\in\Z_n^k,\mathbf{y}'\in(\Z^*_n)^k}
[\Upsilon_{\mathbf{y}'}(\mathbf{m})=t|\mathbf{y}'=\mathbf{y}]\cdot
\frac
{\Pr_{\mathbf{y}'\in(\Z^*_n)^k}[\mathbf{y}'=\mathbf{y}]}
{\Pr_{\mathbf{m}\in\Z_n^k,\mathbf{y}'\in(\Z^*_n)^k}[\Upsilon_{\mathbf{y}'}(\mathbf{m})=t]}.
\end{multline}
But
\begin{align*}
\Pr_{\mathbf{m}\in\Z_n^k,\mathbf{y}'\in(\Z^*_n)^k}[\Upsilon_{\mathbf{y}'}(\mathbf{m})=t]&=\sum_{\mathbf{y}\in(\Z^*_n)^k}\Pr_{\mathbf{m}\in\Z_n^k,\mathbf{y}'\in(\Z^*_n)^k}
[\Upsilon_{\mathbf{y}'}(\mathbf{m})=t|\mathbf{y}'=\mathbf{y}]\cdot\Pr_{\mathbf{y}'\in(\Z^*_n)^k}[\mathbf{y}'=\mathbf{y}]\\
&=\frac{1}{|\Z_n|}.
\end{align*}
Combining this with (\ref{kh-Bayes}) completes the proof.
\end{proof}
\begin{remark}
This key hiding property does not hold in general. The given proof depends on the fact that $\mathbf{m}$ is uniformly distributed in $\Z^k_n$.
\end{remark}
\begin{lemma}
Let $n,k \in \mathbb{N}$, where $n$ is odd, and $p$ the smallest prime divisor of $n$. Then $\mathrm{C}^{n,k}_{\mathrm{RDH}}$ is $\frac{1}{p-1}$-secure against substitution attacks.
\end{lemma}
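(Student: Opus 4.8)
The plan is to reduce the success probability of an arbitrary $\mathrm{C}^{n,k}_{\mathrm{RDH}}$-forger to the A$\Delta$U bound for RDH from Theorem~\ref{thm:GRDH e-adeltau}, using the key hiding lemma established above. First I would pin down exactly when a substitution succeeds. Fix a key $\mathbf{x}||\mathbf{y}$, a source state $\mathbf{m}$, the resulting ciphertext $\mathbf{c}||t=\Psi_{\mathbf{x}}(\mathbf{m})||\Upsilon_{\mathbf{y}}(\mathbf{m})$, and write $\mathbf{c}'||t'=\mathcal{F}(\mathbf{c}||t)$. By definition $\Dec_{\mathbf{x}||\mathbf{y}}(\mathbf{c}'||t')\ne\bot$ iff $\Upsilon_{\mathbf{y}}(\Psi^{-1}_{\mathbf{x}}(\mathbf{c}'))=t'$. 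Since $\mathbf{x}\equiv\mathbf{c}-\mathbf{m}\pmod n$, we have $\Psi^{-1}_{\mathbf{x}}(\mathbf{c}')\equiv(\mathbf{c}'-\mathbf{c})+\mathbf{m}\pmod n$, and by linearity of $\Upsilon_{\mathbf{y}}$ the acceptance condition collapses to
\[
\Upsilon_{\mathbf{y}}(\mathbf{c}'-\mathbf{c})\equiv t'-t\pmod n .
\]
Setting $\mathbf{d}=\mathbf{c}'-\mathbf{c}$ and $s=t'-t$ — both functions of the observed ciphertext $\mathbf{c}||t$ alone — the forger wins precisely when $\Upsilon_{\mathbf{y}}(\mathbf{d})=s$ and $(\mathbf{d},s)\ne(\mathbf{0},0)$.

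Next I would condition on the observed ciphertext. The substitution-success probability then equals the expectation, taken over the true distribution of $\mathbf{c}||t$, of $\Pr[\Upsilon_{\mathbf{y}}(\mathbf{d})=s \wedge (\mathbf{d},s)\ne(\mathbf{0},0)\mid \text{ciphertext}=\mathbf{c}||t]$, in which $\mathbf{d}$ and $s$ are now \emph{fixed}. The crucial step is to observe that, by the key hiding lemma above, the conditional distribution of $\mathbf{y}$ given the observed ciphertext is still uniform on $(\Z^*_n)^k$; hence this conditional probability equals $\Pr_{\mathbf{y}\leftarrow(\Z^*_n)^k}[\Upsilon_{\mathbf{y}}(\mathbf{d})=s]$ for the fixed $\mathbf{d}\ne\mathbf{0}$ (and is $0$ in the degenerate cases).

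It then remains to bound this by $\frac{1}{p-1}$ case by case. If $\mathbf{c}'||t'=\mathbf{c}||t$ (so $(\mathbf{d},s)=(\mathbf{0},0)$) the event $\mathcal{F}(c)\ne c$ fails, contributing $0$. If $\mathbf{d}=\mathbf{0}$ but $s\ne0$, then $\Upsilon_{\mathbf{y}}(\mathbf{0})=0\ne s$, again contributing $0$. If $\mathbf{d}\ne\mathbf{0}$, then $\mathbf{d}$ and $\mathbf{0}$ are two distinct elements of $\Z_n^k$, and since $n$ is odd and every $t_i=1$, Theorem~\ref{thm:GRDH e-adeltau} (applied to RDH) gives $\Pr_{\mathbf{y}\leftarrow(\Z^*_n)^k}[\Upsilon_{\mathbf{y}}(\mathbf{d})-\Upsilon_{\mathbf{y}}(\mathbf{0})=s]\le\frac{1}{p-1}$. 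Averaging the pointwise bound $\frac{1}{p-1}$ over the distribution of $\mathbf{c}||t$ yields the claim.

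The one point requiring care — and the reason the key hiding lemma is genuinely needed here — is the conditioning step: the A$\Delta$U guarantee of Theorem~\ref{thm:GRDH e-adeltau} is a statement about a \emph{uniformly} random hash key, so it cannot be invoked directly for the posterior distribution of $\mathbf{y}$ once the ciphertext has been revealed; the key hiding lemma is exactly what lets us replace that posterior by the uniform prior. Everything else is routine bookkeeping over the finitely many cases for $(\mathbf{d},s)$, together with the elementary algebraic simplification of the decryption condition carried out in the first step.
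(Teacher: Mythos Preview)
Your proposal is correct and follows essentially the same route as the paper: condition on the observed ciphertext, invoke the key hiding lemma to recover a uniform posterior on $\mathbf{y}$, and then apply the $\frac{1}{p-1}$-A$\Delta$U bound for RDH from Theorem~\ref{thm:GRDH e-adeltau}. Your reformulation in terms of $\mathbf{d}=\mathbf{c}'-\mathbf{c}$ and $s=t'-t$ is exactly equivalent to the paper's choice of $\mathbf{m}'=\Psi^{-1}_{\mathbf{x}}(\mathbf{c}')$ and $b=t-t'$, since $\mathbf{m}'-\mathbf{m}=\mathbf{d}$; in fact your explicit case split is a little more careful than the paper's, which simply asserts $\mathbf{m}'\ne\mathbf{m}$ without separately treating the degenerate case $\mathbf{c}'=\mathbf{c}$, $t'\ne t$.
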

\begin{proof} By way of contradiction suppose that $\mathcal{F}$ produces a substitution with probability greater than $\frac{1}{p-1}$. By averaging, there must be some $\mathbf{m}\in\Z_n^k$ such that if  $\Enc_{\mathbf{x}||\mathbf{y}}(\mathbf{m})=\mathbf{c}||t$, for random $\mathbf{x}$ and $\mathbf{y}$, then $\mathcal{F}(\mathbf{c}||t)=\mathbf{c}'||t'$ such that $\mathbf{c}'||t'\ne\mathbf{c}||t$ and $\Upsilon_{\mathbf{y}}(\Phi^{-1}_\mathbf{x})(\mathbf{c}')=t'$. Let $b=t-t'$ and $\mathbf{m}'=(\Phi^{-1}_\mathbf{x})(\mathbf{c}')$. Note that it must be the case that $\mathbf{m}'\ne\mathbf{m}$. By the preceding lemma, $\mathbf{y}$ and $\mathbf{m}'$ are statistically independent. So,
\[
\Upsilon_{\mathbf{y}}(\mathbf{m})
-\Upsilon_{\mathbf{y}}(\mathbf{m}')=b,
\]
for randomly chosen $\mathbf{y}\in(\Z^*_n)^k$, violating that RDH is $\frac{1}{p-1}$-A$\Delta$U by Theorem~\ref{thm:GRDH e-adeltau}.
\end{proof}

\subsection{Discussion}

The proposed scheme, which is a generalization of the scheme proposed in \cite{ACP, AP}, is defined using the {\em encrypt-and-authenticate} paradigm (see \cite{BN, KRA1} and the references therein, for a detailed discussion about these generic constructions and their security analysis). Since this approach requires the decryption of a purported ciphertext before its authentication, it is susceptible to attacks if the implementation of the decryption function leaks information when given invalid ciphertexts. Surprisingly, the preferred {\em encrypt-then-authenticate} approach will not work in our setting because it is not key-hiding.

We now show that the assumption that messages are generated uniformly at random is necessary for our result, by showing that any authentication scheme achieving $\varepsilon$-security against substitution attacks for arbitrary source distributions is in fact an $\varepsilon$-ASU hash family. 

We begin with some definitions.

\begin{definition}
A {\em authentication code} is specified by a tuple $\mathrm{M}=(\So,\Tag,\K,\Mac,\Auth)$ where  $\So$ is the set of {\em source states}, $\Tag$ is the set of {\em tags}, $\K$ is the set of {\em keys}, $\Mac:\K\times\So\rightarrow\Tag$, and $\Auth:\K\times\Tag\rightarrow\{0,1\}$. It must be the case that for all $k \in \K$ and $m \in \So$, $\Auth_k(m||\Mac_k(m))=1$. A {\em forger} is a mapping $\mathcal{F}=\langle\mathcal{F}_1,\mathcal{F}_2\rangle$ where $\mathcal{F}_1:\So\times\Tag\rightarrow\So$ and $\mathcal{F}_2:\So\times\Tag\rightarrow\Tag$. We say $\mathrm{M}$ is $\varepsilon$-{\em secure against substitution attacks} if for every forger $\mathcal{F}$ and distribution {\bf S} on $\So$,
\[
\Pr_{\substack{k\leftarrow\K,m\leftarrow_{\bf S}\So \\ t \leftarrow \Mac_k(m)}}[\mathcal{F}_1(m,t)\ne m \wedge \Auth_k(\mathcal{F}(m||t))=1]\le\varepsilon.
\]
\end{definition}

\begin{theorem}
Suppose $\mathrm{M}=(\So,\Tag,\K,\Mac,\Auth)$ is $\varepsilon$-secure against substitution attacks. Then 
$\{\Mac_k~|~k\in\K\}$ is an $\varepsilon$-\textnormal{ASU} hash function family.
\end{theorem}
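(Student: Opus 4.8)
The plan is to prove the contrapositive-flavored implication directly: assuming $\mathrm{M}=(\So,\Tag,\K,\Mac,\Auth)$ is $\varepsilon$-secure against substitution attacks, I will show that for any two distinct source states $m, m' \in \So$ and any two tags $t, t' \in \Tag$,
\[
\Pr_{k \leftarrow \K}[\Mac_k(m) = t \text{ and } \Mac_k(m') = t'] \le \frac{\varepsilon}{|\Tag|}.
\]
This is exactly the $\varepsilon$-ASU condition from Definition~\ref{def:Uni hash} (with $R = \Tag$, $|R| = |\Tag|$).

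The key idea is to engineer a bad source distribution $\mathbf{S}$ and a forger $\mathcal{F}$ tailored to the pair $(m,m')$ so that the forger's success probability is at least $|\Tag|$ times the collision probability we want to bound. First I would fix distinct $m, m'$ and tags $t, t'$. Define $\mathbf{S}$ to be the point mass on $m$ (i.e.\ the source state is always $m$). Define the forger by $\mathcal{F}_1(m, t) = m'$ if the observed tag equals $t$ (and, say, anything $\ne m$ otherwise, or we simply don't care), and $\mathcal{F}_2(m, t) = t'$. Then under this distribution, the key $k$ is drawn from $\K$, the message is $m$, the tag is $\tau = \Mac_k(m)$, and the forger succeeds precisely when $\tau = t$ (so that $\mathcal{F}_1$ outputs $m' \ne m$) \emph{and} $\Auth_k(m' \| t') = 1$. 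By the correctness requirement $\Auth_k(m'\|\Mac_k(m'))=1$, a sufficient condition for the second event is $\Mac_k(m') = t'$. Hence the forger's success probability is at least $\Pr_{k}[\Mac_k(m)=t \text{ and } \Mac_k(m')=t']$, which by hypothesis is $\le \varepsilon$. This already gives a collision bound of $\varepsilon$, but not the factor $1/|\Tag|$ needed for ASU.

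To recover the $1/|\Tag|$ factor, I would instead randomize over the target tag: let the forger, on input $(m, t)$ where $t$ is the observed tag, output $\mathcal{F}_1 = m'$ and $\mathcal{F}_2 = t$ as well (i.e.\ copy the observed tag but change the message to $m'$). The forger succeeds when $\Mac_k(m') = \Mac_k(m)$, i.e.\ on a genuine collision in the sense of Definition~\ref{def:Uni hash}'s $\Delta$-universality at $b=0$. More carefully, to get the full strongly-universal statement, one averages: consider the forger that, having seen tag $t = \Mac_k(m)$, guesses $m'$ with an independently uniformly random tag $t'' \leftarrow \Tag$; this succeeds with probability $\ge \frac{1}{|\Tag|}\Pr_k[\Mac_k(m')=t'' \mid \Mac_k(m)=t]$ summed appropriately, and choosing $\mathbf{S}$ to be the point mass on $m$ conditioned on $\Mac_k(m)=t$ lets us isolate the joint event. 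Concretely: fix $m, m', t, t'$; the forger outputs $(m', t')$ unconditionally; restrict attention to the event $\Mac_k(m) = t$ (which has some probability $q$); within that event the forger succeeds iff $\Mac_k(m')=t'$; so $\varepsilon \ge \Pr_k[\Mac_k(m)=t \wedge \Mac_k(m')=t']$ — and to get the $1/|\Tag|$ we note $\sum_{t} \Pr_k[\Mac_k(m)=t \wedge \Mac_k(m')=t'] = \Pr_k[\Mac_k(m')=t'] \le 1$, so by averaging over $t$ some term is $\le 1/|\Tag|$ — but we need \emph{all} terms small, so the cleaner route is to have the forger \emph{sample} its output tag, making its per-key success probability exactly $\frac{1}{|\Tag|}\mathbf{1}[\Mac_k(m')\text{ arbitrary}]$ collapse correctly; I will write this out so that the randomization in $\mathcal{F}_2$ contributes precisely the $1/|\Tag|$ factor and conclude $\Pr_k[\Mac_k(m)=t \wedge \Mac_k(m')=t'] \le \varepsilon/|\Tag|$ for all choices.

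The main obstacle is exactly this bookkeeping of the $1/|\Tag|$ factor: the naive point-mass forger only yields an $\varepsilon$-AU-type bound, and one must exploit that the forger is allowed to be randomized (or, equivalently, average over a family of deterministic forgers indexed by the guessed tag, invoking that the security bound holds for every forger simultaneously and in particular for the best one, whose success is the average) to sharpen $\varepsilon$ to $\varepsilon/|\Tag|$. One must also be careful that the definition of forger in the paper takes $(m,t)$ as input, so the forger \emph{sees} the source state $m$; since we use a point-mass distribution this is harmless, and indeed helpful, as it lets the forger hard-code the correct $m'$ to switch to. A final minor point is checking that the correctness axiom $\Auth_k(m'\|\Mac_k(m'))=1$ is all that is used on the verification side — we do not need any converse (that $\Auth_k$ rejects wrong tags), since we only need a \emph{lower} bound on the forger's success probability.
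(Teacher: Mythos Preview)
Your second paragraph already \emph{is} the paper's proof: take the point-mass distribution $\mathbf{S}$ on $m$, let the forger output $(m',t')$ whenever it sees tag $t$, and observe that the success probability is at least $\Pr_{k\leftarrow\K}[\Mac_k(m)=t \wedge \Mac_k(m')=t']$, which is therefore $\le \varepsilon$. The paper argues exactly this way (contrapositively: if that probability exceeded $\varepsilon$, the forger would violate $\varepsilon$-security). That is the entire content of the paper's proof.

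The trouble is that you then declare this bound insufficient and spend two paragraphs chasing an extra factor of $1/|\Tag|$. This stems from an inconsistency in the paper itself: the informal discussion in Section~1.1 defines $\varepsilon$-ASU by the bound $\le \varepsilon$, while the formal Definition~1.1 writes $\le \varepsilon/|R|$. The proof of the present theorem plainly uses the former convention (its negation is ``$\Pr_k[\Mac_k(m'')=t'' \wedge \Mac_k(m')=t'] > \varepsilon$''), so the target you should be aiming at is the $\le \varepsilon$ bound, which you already established.

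Your attempts to recover the $1/|\Tag|$ factor do not work, and cannot work from substitution security alone. For instance, a forger that outputs $(m', t'')$ with $t''$ uniformly random in $\Tag$ succeeds with probability exactly $1/|\Tag|$ regardless of the key distribution, which tells you nothing about the joint event $\{\Mac_k(m)=t,\ \Mac_k(m')=t'\}$. More structurally, write $\Pr_k[\Mac_k(m)=t \wedge \Mac_k(m')=t'] = \Pr_k[\Mac_k(m)=t]\cdot \Pr_k[\Mac_k(m')=t' \mid \Mac_k(m)=t]$: substitution security bounds only the conditional factor by $\varepsilon$, while $\Pr_k[\Mac_k(m)=t]$ can be as large as $1$ (e.g., if $\Mac$ ignores $k$). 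A $1/|\Tag|$ bound on $\Pr_k[\Mac_k(m)=t]$ would require an additional impersonation-security assumption, which the theorem does not posit. So drop the third and fourth paragraphs; your second paragraph is the complete and correct argument.
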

\begin{proof}
Suppose $\{\Mac_k~|~k\in\K\}$ is not an $\varepsilon$-ASU hash family. So there are $m'\ne m'' \in \So$ and $t',t'' \in \Tag$ such that $\Pr_{k\leftarrow\K}[\Mac_k(m'')=t'' \wedge \Mac_k(m')=t'] > \varepsilon$.
Take  $\mathcal{F}$ such that $\mathcal{F}(m'||t')=m''||t''$, and let $\mathbf{S}$ be the distribution on $\So$ which puts all weight on $m'$. Then
\begin{align*}
\ & \Pr_{\substack{k\leftarrow\K,m\leftarrow_{\bf S}\So \\ t \leftarrow \Mac_k(m)}}[\mathcal{F}_1(m,t)\ne m \wedge \Auth_k(\mathcal{F}(m||t))=1]\\
=\ & \Pr_{\substack{k\leftarrow\K \\ t\leftarrow\Mac_k(m')}}[\mathcal{F}_1(m',t)\ne m'\wedge \Auth_k(\mathcal{F}(m'||t)=1]\\
=\ & \Pr_{\substack{k\leftarrow\K \\ t\leftarrow\Mac_k(m')}}[\mathcal{F}_1(m',t)\ne m'\wedge \Auth_k(\mathcal{F}(m'||t)=1|t=t']\cdot \Pr_{\substack{k\leftarrow\K \\ t\leftarrow\Mac_k(m')}}[t=t']\\
=\ & \Pr_{k\leftarrow\K}[\mathcal{F}_1(m',t')\ne m'\wedge \Auth_k(\mathcal{F}(m'||t')=1\wedge \Mac_k(m')=t']\\
=\ & \Pr_{k\leftarrow\K}[m''\ne m'\wedge \Mac_k(m'')=t'' \wedge \Mac_k(m')=t'] > \varepsilon.
\end{align*}
\end{proof}

\section*{Acknowledgements}

The authors would like to thank Martin Dietzfelbinger, Igor Shparlinski, and Roberto Tauraso for helpful comments on earlier versions of this paper. We are also grateful to the anonymous referees for helpful suggestions. During the preparation of this work the first author was supported by a Fellowship from the University of Victoria (UVic Fellowship).


\begin{thebibliography}{99}

\bibitem{ACP} B.~Alomair, A.~Clark, and R.~Poovendran, The power of primes: security of authentication based on a universal hash-function family, {\it J. Math. Cryptol.} {\bf 4} (2010), 121--148.

\bibitem{ALP} B.~Alomair, L.~Lazos, and R.~Poovendran, Securing low-cost RFID systems: An unconditionally secure approach, {\it J. Comput. Secur.} {\bf 19} (2011), 229--257.

\bibitem{AP} B.~Alomair and R.~Poovendran, Information theoretically secure encryption with almost free authentication, {\it J.UCS} {\bf 15} (2009), 2937--2956.

\bibitem{BN} M.~Bellare and C.~Namprempre, Authenticated encryption: relations among notions and analysis of the generic composition paradigm, {\it Advances in Cryptology --- ASIACRYPT 2000}, LNCS {\bf 1976}, 2000, 531--545.

\bibitem{BKS2} K.~Bibak, B.~M.~Kapron, and V.~Srinivasan, Counting surface-kernel epimorphisms from a co-compact Fuchsian group to a cyclic group with motivations from string theory and QFT, {\it Nuclear Phys. B} {\bf 910} (2016), 712--723.

\bibitem{BKS4} K.~Bibak, B.~M.~Kapron, and V.~Srinivasan, MMH$^*$ with arbitrary modulus is always almost-universal, {\it Inform. Process. Lett.} {\bf 116} (2016), 481--483.

\bibitem{BKSTT3} K.~Bibak, B.~M.~Kapron, and V.~Srinivasan, On a restricted linear congruence, {\it Int. J. Number Theory} {\bf 12} (2016), 2167--2171.

\bibitem{BKSTT} K.~Bibak, B.~M.~Kapron, V.~Srinivasan, R.~Tauraso, and L.~T\'oth, Restricted linear congruences, {\it J. Number Theory} {\bf 171} (2017), 128--144.

\bibitem{BKSTT21} K.~Bibak, B.~M.~Kapron, V.~Srinivasan, and L.~T\'oth, On a variant of multilinear modular hashing with applications to authentication and secrecy codes, In Proceedings of the {\it International Symposium on Information Theory and Its Applications --- ISITA 2016}, Monterey, California, USA, Oct. 30 -- Nov. 2, 2016, pp. 320--324.

\bibitem{BHKKR} J.~Black, S.~Halevi, H.~Krawczyk, T.~Krovetz, and P.~Rogaway, UMAC: Fast and secure message authentication, {\it Advances in Cryptology --- CRYPTO 1999}, LNCS {\bf 1666}, 1999, 216--233.

\bibitem{CW} J.~L.~Carter and M.~N.~Wegman, Universal classes of hash functions, {\it J. Comput. System Sci} {\bf 18} (1979), 143--154.

\bibitem{COH0} E.~Cohen, A class of arithmetical functions, {\it Proc. Natl. Acad. Sci. USA} {\bf 41} (1955), 939--944.

\bibitem{DIX} J.~D.~Dixon, A finite analogue of the Goldbach problem, {\it Canad. Math. Bull.} {\bf 3} (1960), 121--126.

\bibitem{DORS} Y.~Dodis, R.~Ostrovsky, L.~Reyzin, and A.~Smith, Fuzzy extractors: How to generate strong keys from biometrics and other noisy data, {\it SIAM J. Comput.} {\bf 38} (2008), 97--139.

\bibitem{GMS} E.~N.~Gilbert, F.~J.~MacWilliams, and N.~J.~A.~Sloane, Codes which detect deception, {\it Bell Syst. Tech. J.} {\bf 53} (1974), 405--424.

\bibitem{HK} S.~Halevi and H.~Krawczyk, MMH: Software message authentication in the Gbit/second rates, {\it Fast Software Encryption --- FSE 1997}, LNCS {\bf 1267}, 1997, 172--189.

\bibitem{HP} H.~Handschuh and B.~Preneel, Key-recovery attacks on universal hash function based MAC algorithms, {\it Advances in Cryptology --- CRYPTO 2008}, LNCS {\bf 5157}, 2008, 144--161.

\bibitem{HILL} J.~H\r{a}stad, R.~Impagliazzo, L.~A.~Levin, and M.~Luby, A pseudorandom generator from any one-way function, {\it SIAM J. Comput.} {\bf 28} (1999), 1364--1396.

\bibitem{HAYA1} M.~Hayashi, General nonasymptotic and asymptotic formulas in channel resolvability and identification capacity and their application to the wiretap channel, {\it IEEE Trans. Inf. Theory} {\bf 52} (2006), 1562--1575.

\bibitem{HAYA2} M.~Hayashi, Exponential decreasing rate of leaked information in universal random privacy amplification, {\it IEEE Trans. Inf. Theory} {\bf 57} (2011), 3989--4001.

\bibitem{HubEq} M.~Huber, Authentication and secrecy codes for equiprobable source probability distributions, {\it IEEE International Symposium on Information Theory --- ISIT 2009}, 1105--1109.

\bibitem{Hub2} M.~Huber, Combinatorial designs for authentication and secrecy codes, {\it Foundations and Trends in Communications and Information Theory}, 5(6), 581--675, (2010).

\bibitem{IMZU} R.~Impagliazzo and D.~Zuckerman, How to recycle random bits, {\it Proceedings of the 30th Annual Symposium on Foundations of Computer Science --- FOCS 1989}, 248--253.

\bibitem{JAWILL} D.~Jacobson and K.~S.~Williams, On the number of distinguished representations of a group element, {\it Duke Math. J.} {\bf 39} (1972), 521--527.

\bibitem{KSV} H.~J.~Karloff, S.~Suri, and S.~Vassilvitskii, A model of computation for MapReduce, 
{\it Proceedings of the 21st Annual ACM-SIAM Symposium on Discrete Algorithms --- SODA 2010}, 938--948.

\bibitem{K} H.~Krawczyk, LFSR-based hashing and authentication, {\it 14th Annual International Cryptology Conference --- CRYPTO 1994}, 129--139.

\bibitem{KRA1} H.~Krawczyk, The order of encryption and authentication for protecting communications
(or: how secure is SSL?), {\it Advances in Cryptology --- CRYPTO 2001}, 310--331.

\bibitem{LEH2} D.~N.~Lehmer, Certain theorems in the theory of quadratic residues, {\it Amer. Math. Monthly} {\bf 20} (1913), 151--157.

\bibitem{LSS} C.~E.~Leiserson, T.~B.~Schardl, and J.~Sukha, Deterministic parallel random-number generation for dynamic-multithreading platforms, {\it Proceedings of the 17th ACM SIGPLAN Symposium on Principles and Practice of Parallel Programming --- PPoPP 2012}, 193--204.

\bibitem{LIS} V.~A.~Liskovets, A multivariate arithmetic function of combinatorial and topological significance, {\it Integers} {\bf 10} (2010), 155--177.

\bibitem{MSY} L.~McAven, R.~Safavi–-Naini, and M.~Yung, Symmetric authentication codes with secrecy and unconditionally secure authenticated encryption, {\it Progress in Cryptology --- INDOCRYPT 2004}, 148--161.

\bibitem{MENE} A.~Mednykh and R.~Nedela, Enumeration of unrooted maps of a given genus, {\it J. Combin. Theory Ser. B} {\bf 96} (2006), 706--729.

\bibitem{MIC} D.~Micciancio, Generalized compact knapsacks, cyclic lattices, and efficient one-way functions, {\it Comput. Complexity} {\bf 16} (2007), 365--411.

\bibitem{MORA} R.~Motwani and P.~Raghavan, \textit{Randomized Algorithms}, Cambridge University Press, (1995).

\bibitem{NV} C.~A.~Nicol and H.~S.~Vandiver, A von Sterneck arithmetical function and restricted partitions with respect to a modulus, {\it Proc. Natl. Acad. Sci. USA} {\bf 40} (1954), 825--835.

\bibitem{NIS} N.~Nisan, Pseudorandom generators for space-bounded computations, 
{\it Proceedings of the 22nd Annual ACM symposium on Theory of Computing --- STOC 1990}, 204--212.

\bibitem{PAPA} A.~Pagh and R.~Pagh, Uniform hashing in constant time and optimal space, {\it SIAM J. Comput.} {\bf 38} (2008), 85--96.

\bibitem{RW} R.~Renner and S.~Wolf, Simple and tight bounds for information reconciliation and privacy amplification, {\it Advances in Cryptology --- ASIACRYPT 2005}, 199--216.

\bibitem{R} P.~Rogaway, Bucket hashing and its application to fast message authentication, {\it 15th Annual International Cryptology Conference --- CRYPTO 1995}, 29--42.

\bibitem{RUWI} S.~Rudich and A.~Wigderson, \textit{Computational Complexity Theory}, IAS/Park City Mathematics Series, American Mathematical Society, (2004).

\bibitem{San2009} J.~W.~Sander, On the addition of units and nonunits mod $m$, {\it J. Number Theory} {\bf 129} (2009), 2260--2266.

\bibitem{SanSan2013} J.~W.~Sander and T.~Sander, Adding generators in cyclic groups, {\it J. Number Theory} {\bf 133} (2013), 705--718.

\bibitem{SHA} C.~E.~Shannon, Communication theory of secrecy systems, {\it Bell System Technical Journal},
{\bf 28} (1949), 656--715.

\bibitem{SIE} A.~Siegel, On universal classes of extremely random constant-time hash functions, {\it SIAM J. Comput.} {\bf 33} (2004), 505--543.

\bibitem{SIP} M.~Sipser, A complexity theoretic approach to randomness, {\it Proceedings of the 15th Annual ACM symposium on Theory of Computing --- STOC 1983}, 330--335.

\bibitem{S} D.~R.~Stinson, On the connection between universal hashing, combinatorial designs and error-correcting codes, {\it Electronic colloquium on computational complexity (ECCC)}, 2(52), 1995.

\bibitem{SY2014} C.-F.~Sun and Q.-H.~Yang, On the sumset of atoms in cyclic groups, {\it Int. J. Number Theory} {\bf 10} (2014), 1355--1363.

\bibitem{TOT} L.~T\'oth, Some remarks on a paper of V. A. Liskovets, {\it Integers} {\bf 12} (2012), 97--111.

\bibitem{TYVA} H.~Tyagi and A.~Vardy, Universal hashing for information-theoretic security, 
{\it Proceedings of the IEEE} {\bf 103} (2015), 1781--1795.

\bibitem{WC} M.~N.~Wegman and J.~L.~Carter, New hash functions and their use in authentication and set equality, {\it J. Comput. System Sci} {\bf 22} (1981), 265--279.




\end{thebibliography}
\end{document}